\newcommand{\ignore}[1]{}
\newtheorem{theorem}{Theorem}[section]
\newtheorem{lemma}[theorem]{Lemma}
\newtheorem{corollary}[theorem]{Corollary}
\newtheorem{proposition}[theorem]{Proposition}
\newtheorem{definition}{Definition}
\newtheorem{claim}[theorem]{Claim}
\newcommand{\qed}{\hfill{\fbox{}}}
\newenvironment{proof}{\par\noindent{\bf Proof:}}{\qed \par}
\newcommand{\smallspace}{\vspace{.1in}\noindent}
\newcommand{\cout}[1]{}       
\newcommand{\notation}{\smallspace {\bf Notation:}}
\newcommand{\notations}{\smallspace {\bf Notation:}}
\newcommand {\maximizer} {maximal in its range}
\newcommand {\vcgbased} {VCG-based}
\begin{document}

\title{Computationally Feasible VCG Mechanisms}

\author{\name Noam Nisan \email noam@cs.huji.ac.il \\
       \addr School of Computer Science and Engineering, \\
       The Hebrew University of Jerusalem, Israel
       \AND
       \name Amir Ronen \email amirr@ie.technion.ac.il \\
       \addr Faculty of Industrial Engineering \& Management, \\
       Technion - Israel Institute of Technology,\\
       Haifa 32000, Israel
       }


\maketitle

\begin{abstract} A major achievement of mechanism design theory is a general
method for the construction of truthful mechanisms called VCG
(Vickrey, Clarke, Groves). When applying this method to complex
problems such as combinatorial auctions, a difficulty arises: VCG
mechanisms are required to compute optimal outcomes and are,
therefore, computationally infeasible. However, if the optimal
outcome is replaced by the results of a sub-optimal algorithm, the
resulting mechanism (termed VCG-based) is no longer necessarily
truthful.
The first part of this paper studies this phenomenon in depth and
shows that it is near universal.  Specifically, we prove that
essentially all reasonable approximations or heuristics for
combinatorial auctions as well as a wide class of cost
minimization problems yield non-truthful VCG-based mechanisms. We
generalize these results for affine maximizers.

The second part of this paper proposes a general method for
circumventing the above problem. We introduce a modification of
VCG-based mechanisms in which the agents are given a chance to
improve the output of the underlying algorithm. When the agents
behave truthfully, the welfare obtained by the mechanism is at
least as good as the one obtained by the algorithm's output. We
provide a strong rationale for truth-telling behavior. Our method
satisfies individual rationality as well.
\end{abstract}

\section{Introduction}
Mechanism design is a sub-field of game theory and microeconomics
that studies the design of protocols for non-cooperative
environments. In such environments the participating agents follow
their {\em own} goals and do not necessarily act as instructed by
the mechanism. This theory has traditionally been applied to
economic applications such as auctions of various kinds. An
introduction to mechanism design can be found in several books
\cite{RO94,MWG95}. In recent years, problems on the border of
mechanism design and computer science have attracted the attention
of many researchers, both within and outside the AI community. In
particular, mechanism design models were applied to multi-agent
systems \cite{JG94,Wellmangeb,st97,stijcai2001}, decentralized
resource and task allocations \cite{NR00,Wellmangeb,ess03,PSTR02},
economic and electronic commerce applications \cite{pa99,cra97},
and communication networks \cite{fps00,aks02}.

The canonical mechanism design problem can be described as
follows: A set of rational agents needs to collaboratively choose
an {\em outcome} $o$ from a finite set $O$ of possibilities.  Each
agent $i$ has a privately known {\em valuation function} $v^i:O
\rightarrow R$ quantifying the agent's benefit from each possible
outcome.  The agents are supposed to report their valuation
functions $v^i(\cdot)$ to some centralized mechanism. The goal of
the mechanism is to choose an outcome $o$ that maximizes the {\em
total welfare} $\sum_i v^i(o)$. The main difficulty is that agents
may choose to misreport their valuations in an attempt to affect
the outcome to their liking. Such manipulations are likely to
severely damage the resulting welfare (simulations that
demonstrate this welfare loss can be found in
\shortciteR{grussoCArrol05}). The tool that the mechanism uses to
motivate the agents to reveal the truth is monetary payments.
These payments need to be designed in a way that ensures that
rational agents always reveal their true valuations. Mechanisms
with this property are called {\em incentive compatible} or {\em
truthful} (in dominant strategies). To date, only one general
method, called VCG \cite{Vic61,Cla71,Gro73} (or slightly more
generally, affine maximization), is known for designing such a
payment structure\footnote{Recently, a few truthful mechanisms,
which are not affine maximizers, were obtained for combinatorial
auctions \cite{BGN03}.}. In some settings, it is known that this
method is the sole available one \cite{Rob77,LNM03}.



Many novel applications of mechanism design are complex and
require implementation on computer systems. Cases in point include
combinatorial auctions where multiple items are concurrently sold
in an auction \cite{cabook06}, decentralized task and resource
allocation problems \cite{NR00,Wellmangeb}, and networking
applications \cite{fps00,aks02}.
For many of these applications, the range of possible outcomes is
huge and even finding an outcome that maximizes the total welfare
is known to be NP-complete. Since for such cases computing the
optimal outcome is intractable, the VCG method cannot be applied.

A natural general approach for the development of mechanisms for
such cases would be to use a sub-optimal polynomial time algorithm
for computing the outcome, and calculate the payments by applying
the VCG payment rule to the underlying algorithm.  We term such
mechanisms {\em \vcgbased}.

The starting point of this paper is the observation, noted already
by some researchers \cite{lsl99,NR00}, that \vcgbased\ mechanisms
are not necessarily truthful. Thus, rational agents may lie,
taking advantage of quirks in the outcome determination algorithm.

\subsection{\vcgbased\ Mechanisms are Generally not Truthful}

The first part of this paper examines this last phenomenon in
depth and shows that it is near universal: essentially {\em all}
reasonable \vcgbased\ mechanisms are {\em not} truthful.

We first turn our attention to combinatorial auctions and
characterize the class of truthful \vcgbased\ mechanisms for this
problem\footnote{The importance of combinatorial auctions is
twofold. First, they have direct applications such as FCC
auctions. Second, they abstract many problems of resource
allocation among self-interested agents. A comprehensive survey of
research on combinatorial auctions can be found in a recent book
\cite{cabook06}.}. We say that an allocation algorithm for
combinatorial auctions is {\em reasonable} if, whenever an item is
desired by a single agent only, this agent receives the item. The
above characterization leads to the following corollary:

\smallspace \noindent {\bf Theorem:} Any truthful \vcgbased\
mechanism for combinatorial auctions is not reasonable (unless it
uses the exponential optimal allocation algorithm).

\smallspace In particular, unless $P= NP$, every polynomial time,
truthful \vcgbased\ mechanism is not reasonable. \smallspace

Loosely speaking, we show that essentially the only degree of
freedom available to truthful \vcgbased\ mechanisms is the choice
of {\em range} over which to optimize.  Within this range perfect
optimization is needed.  This theorem seems intuitive as VCG
payments identify each agent's utility with that of society, and
thus if the social welfare is not optimized by the mechanism, then
agents are motivated to to lie in order to do so. Yet, such an
argument only shows that the outcome must be {\em locally} optimal
-- with locality defined as a deviation by a single agent. The
heart of our argument is a delicate hybrid argument showing that
in a general context this local optimization essentially implies
global optimization.

Next we study a family of problems termed {\em cost minimization}
allocation problems.  This family contains many natural
decentralized task allocation problems such as mechanism design
versions of the shortest path problem \cite{ess03,NR00,JG94}. We
call a mechanism for such a problem {\em degenerate} if there
exist inputs that cause it to produce results that are arbitrarily
far from the optimal.

\smallspace \noindent {\bf Theorem:} For any cost minimization
allocation problem, any sub-optimal truthful \vcgbased\ mechanism
is degenerate.

\smallspace A word is in order here about the significance of
these results. \vcgbased\ mechanisms are not just some special
case of truthful mechanisms -- they are essentially the only
general method known for truthful mechanisms in non-single
dimensional settings. Moreover, in certain settings they are known
to indeed be the only truthful mechanisms \cite{Rob77,LNM03}. More
precisely, weighted versions of \vcgbased\ mechanisms -- called
affine maximizers -- are truthful, but our results extend (as we
show) to these cases as well. Consequently, our results imply that
designing truthful mechanisms for computationally intractable
problems requires either restricting the range of the outcomes
(getting ``unreasonable'' or ``degenerate'' mechanisms) or
developing entirely new techniques for truthful mechanisms --
which may not even exist.   A similar implication results if the
intractability stems not from computational considerations, but
rather from communication considerations
\cite[Chapter 11]{cabook06}.

\subsection{The Second Chance Mechanism}

The second part of this paper proposes a general method for
circumventing the difficulty of constructing truthful mechanisms.
While \vcgbased\ mechanisms lose their incentive compatibility,
they still pose a very special property. Loosely speaking, in such
a mechanism, the only reason for an agent to misreport its
valuation is to ``help'' the algorithm compute a better outcome.
We would like to exploit this property to obtain mechanisms that
are ``almost'' truthful.

Given {\em any} algorithm for the corresponding optimization
problem we define the {\em second chance} mechanism based on it.
This mechanism is a modification of the \vcgbased\ mechanism
where, in addition to their valuations, the agents are allowed to
submit {\em appeal functions}. An appeal function allows the agent
to give the algorithm an input (a vector of declared valuations),
which is different from the original input, but without
misreporting its type. When the agents behave truthfully, the
welfare obtained by the mechanism is at least as good as the one
obtained by the algorithm's output.

We then formulate the rationale for truthful behavior in our
mechanism. Informally, our argument is as follows: Under
reasonable assumptions, in any situation in which the agent
believes it is beneficial for it to lie to the mechanism, it is
better for the agent to report its actual type to the mechanism
and ask its appeal to check whether this lie really helps it.
Thus, the agent can construct a truthful strategy premised on the
fact that it is not aware of {\bf any} situation in which another
strategy would be better. We believe that this is a strong
argument for truth-telling.

We construct a version of our mechanism that satisfies individual
rationality as well. A generalization of our results to affine
maximization and to compensation and bonus mechanisms \cite{NR00}
is straightforward.

Several alternative approaches aimed at handling the difficulty of
developing truthful mechanisms were suggested in the past. One
approach is the construction of mechanisms that are
computationally hard to manipulate (e.g.,
\shortciteR{BrtholdiToveyTrick92}). To the best of our knowledge
such manipulations are only hard in the worst case (e.g., it may
be NP-hard to always compute such a manipulation). Nevertheless,
such hardness does not rule out the possibility that manipulations
may be easy to compute in ``typical'' cases. Another possible
approach is to consider other equilibria of VCG \cite{RNDM,RD02}.
However, there is no apparent way for the agents to coordinate
such equilibria. Several recent works construct ascending
mechanisms for combinatorial auctions (e.g. \shortciteR{pa99}).
Such mechanisms rely on assumptions about the agents which are
very different from ours (e.g., myopic behavior). It may be
interesting to compare the virtues of such mechanisms to those of
ours.

A few multi-round mechanisms for combinatorial auctions that let
the agents improve the provisional allocation were proposed and
tested in the past \cite{BLP}. Our argument for truthfulness in
second chance mechanisms may provide a partial explanation for the
relative success reported in these experiments.



\section{Preliminaries}
\label{prelim}

In this section we formally present our model. We attempt as much
as possible to use the standard notions of both mechanism design
and computational complexity theories.

\subsection{Mechanism Design Problems}
\label{Mechanismdesignproblems}

This section formulates the class of mechanism design problems
that we study.

\begin{definition}
\label{MechanismDesignProblem} {\bf (utilitarian mechanism design
problem)} A {\em (utilitarian) mechanism design problem} is
described by:

\begin{enumerate}

\item
A finite set $O$ of {\em allowed outputs}.

\item
Each agent $i = (1, \ldots, n)$ has a real function $v^i(o \in O)$
called its {\em valuation} or {\em type}. This is a quantification
of its benefit from each possible output $o$ in terms of some
common currency. $v^i(.)$ is {\bf privately} known to agent $i$.

\item
If the mechanism's output is $o$ and in addition the mechanism
hands the agent $p^i$ units of currency, then its {\em utility}
$u^i$ equals\footnote{This assumption is called quasi-linearity
and is very common in mechanism design.} $v^i(o) + p^i$. This
utility is what the {\bf agent} aims to optimize.

\item
The goal of the {\bf mechanism} is to select an output $o \in O$
that maximizes the {\em total welfare} $g(v, o) = \sum_i v^i(o)$.

\end{enumerate}
\end{definition}

\noindent An example of such a problem can be found in Section
\ref{example}.

Note that the goal in these problems is  to maximize the total
welfare but not necessarily the revenue. This goal, also known as
economic efficiency, is justified in many settings and is
extensively studied in economics.

In a direct revelation mechanism, the participants are simply
asked to reveal their types to the mechanism. Based on these
declarations the mechanism computes the output $o$ and the payment
$p^i$ for each of the agents.

\begin{definition}
\label{Themechanism} {\bf (mechanism)} A {\em (direct revelation)
mechanism} is a pair $m = (k, p)$ such that:

\begin{itemize}

\item
The {\em output function} $k$ accepts as input a vector $w = (w^1,
\ldots, w^n)$ of declared valuation functions\footnote{We do not
consider the issue of how to represent the valuations.} and
returns an output $k(w) \in O$.

\item
The {\em payment function} $p(w) = (p^1(w), \ldots, p^n(w))$
returns a real vector quantifying the payment handed by the
mechanism to each of the agents (e.g. if $p^i = 2$, the mechanism
pays two units of currency to agent $i$).

\end{itemize}
\end{definition}

The agents try to maximize their own utility and thus may {\bf
lie} to the mechanism. As these lies might severely reduce the
total welfare, the mechanism should be carefully designed such
that it will be for the benefit of the agents to report their
types truthfully.

{\notation} We denote the tuple $(a^1, ... a^{i-1},a^{i+1}, ...,
a^n)$ by $a^{-i}$. We let $(a^i, a^{-i})$ denote the tuple $(a^1,
\ldots, a^n)$.

\begin{definition}{\bf (truthful mechanism)}
A mechanism is called {\em truthful} if truth-telling is a
dominant strategy, i.e., for every agent $i$ of type $v^i$ and for
every type declaration $w^{-i}$ for the other agents, the agent's
utility is {\em maximized} when it declares its real valuation
function $v^i$.
\end{definition}

As an example consider the famous Vickrey auction \cite{Vic61}: A
seller wishes to sell one item in an auction. There are $n$
buyers, each privately knowing its valuation $v^i$ for this item.
(The value for not winning is assumed to be zero.) In a Vickrey
auction each of the buyers is simply asked for its valuation; the
item is allocated to the buyer with the highest bid for the price
of the second highest. The reader may verify that this mechanism
is truthful. Another example of a truthful mechanism can be found
in Section \ref{example}.

In general, the communication protocol of a mechanism can be
complicated. A simple observation known as the {\em revelation
principle} for dominant strategies (e.g., \shortciteR[pp.
871]{MWG95}) states that for every mechanism where the agents have
dominant strategies, there exists an equivalent truthful
mechanism. Thus, w.l.o.g. it is possible to focus on truthful
mechanisms.


\subsection{\vcgbased\ Mechanisms}
\label{VCGbasedmechanisms}

This subsection presents the celebrated VCG mechanisms.
Intuitively, these mechanisms solve utilitarian problems by {\bf
identifying} the utility of truthful agents with the declared
total welfare. We then generalize these mechanisms.

\begin{definition}
\label{VCGmechanism} {\bf (VCG mechanism, (via
\shortciteR{Gro73}))} A mechanism $m = (k , p)$ belongs to the
{\em VCG} family if:
\begin{itemize}

\item
$k(w)$ maximizes the total welfare according to $w$. That is, for
all $w$, $k(w) \in \arg\max_o g(w,o)$.

\item
The payment is calculated according to the {\em VCG formula}:
$p^i(w) = \sum_{j \neq i} w^j(k(w)) + h^i(w^{-i})$ ($h^i(.)$ is an
arbitrary function of $w^{-i}$).

\end{itemize}
\end{definition}

The reader may verify that the Vickrey auction is a VCG mechanism.
It is well known that VCG mechanisms are truthful \cite{Gro73}.

Unfortunately, for many applications, the task of finding an
output $k(w)$ that maximizes the total welfare is computationally
infeasible (e.g., NP-hard). In this paper we consider VCG
mechanisms where the optimal algorithm is replaced by a
sub-optimal but computationally feasible one.

\begin{definition}
\label{VCGBased} {\bf (VCG-based mechanism)} Let $k(w)$ be an
algorithm that maps type declarations into allowable outputs. We
call $m = (k(w), p(w))$ a {\em VCG mechanism based on $k(.)$} if
$p(.)$ is calculated according to the {\em VCG formula}: $p^i(w) =
\sum_{j \neq i} w^j(k(w)) + h^i(w^{-i})$ (where $h^i(.)$ is an
arbitrary function of $w^{-i}$).
\end{definition}

\smallspace Obviously, a VCG-based mechanism that is based on an
optimal algorithm is a VCG mechanism. Note that the payment
function of a VCG-based mechanism is {\bf not} identical to the
VCG payment because the algorithm $k(.)$ is plugged into the
payment formula. We will now characterize the utility of an agent
in VCG-based mechanisms. This utility is equivalent to the total
welfare according to the {\em declared} types of the other agents
and the {\em actual} type of the agent under consideration.

\begin{lemma}\label{LLemmaVCGutil}{\bf (VCG-based utility)} Consider a VCG-based
mechanism defined by the allocation algorithm $k(.)$, and the
functions $(h^1(.), \ldots, h^n(.))$. Suppose that the {\bf
actual} valuation of agent $i$ is $v^i$, and the declarations are
$w = (w^1(.), \ldots, w^n(.))$. Then the {\em utility} of agent
$i$ equals $g((v^i, w^{-i}), k(w)) + h^i(w^{-i})$.
\end{lemma}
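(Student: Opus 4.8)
The plan is to compute the agent's utility directly from the definitions. Recall that under the quasi-linearity assumption of Definition~\ref{MechanismDesignProblem}, if the mechanism's output is $o$ and agent $i$ receives payment $p^i$, then its utility is $u^i = v^i(o) + p^i$, where $v^i$ is the agent's \textbf{actual} valuation. Here the output is $k(w)$ and the payment, by the VCG-based formula of Definition~\ref{VCGBased}, is $p^i(w) = \sum_{j \neq i} w^j(k(w)) + h^i(w^{-i})$.

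First I would substitute these into the utility expression:
\[
u^i = v^i(k(w)) + \sum_{j \neq i} w^j(k(w)) + h^i(w^{-i}).
\]
Then I would recognize the first two terms as a total-welfare expression evaluated at the output $k(w)$, but with respect to the modified valuation profile in which agent $i$'s declaration is replaced by its true valuation $v^i$ while all other agents keep their declarations $w^{-i}$. That is, writing $g(\cdot,\cdot)$ for the total welfare function $g(v,o) = \sum_i v^i(o)$ from Definition~\ref{MechanismDesignProblem}, we have $v^i(k(w)) + \sum_{j\neq i} w^j(k(w)) = g((v^i, w^{-i}), k(w))$ by the notational convention for $(a^i, a^{-i})$. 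Hence $u^i = g((v^i, w^{-i}), k(w)) + h^i(w^{-i})$, which is exactly the claimed identity.

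There is essentially no obstacle here: the lemma is a direct unpacking of the quasi-linear utility definition and the VCG-based payment formula, combined with the observation that a sum of per-agent valuations at a fixed output is, by definition, the welfare function applied to the corresponding profile. The only point requiring a moment's care is keeping straight that the term contributed by agent $i$ uses its \emph{true} type $v^i$ (coming from the utility definition) whereas the terms for $j \neq i$ use the \emph{declared} types $w^j$ (coming from the payment formula) --- this asymmetry is precisely what makes the welfare argument the mixed profile $(v^i, w^{-i})$ rather than either $w$ or $v$. I would state this explicitly so the reader sees why the lemma captures the idea that a VCG-based mechanism identifies each agent's utility with a ``society'' in which only that agent is honest.
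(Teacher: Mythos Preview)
Your proof is correct and follows essentially the same approach as the paper: both simply substitute the VCG-based payment formula into the quasi-linear utility expression and recognize the resulting sum as $g((v^i,w^{-i}),k(w))$. The paper's own proof is a one-line version of exactly this computation.
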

\begin{proof} The proof is immediate from the definitions. The
agent's utility equals $v^i(k(w)) + p^i(w) = v^i(k(w)) + \sum_{j
\neq i} v^j(k(w)) + h^i(w^{-i}) = g((v^i, w^{-i}), k(w)) +
h^i(w^{-i})$.
\end{proof}

\smallspace In other words, a VCG-based mechanism {\bf identifies}
the utility of truthful agents with the total welfare. In
particular, when $k(.)$ is optimal, $g((v^i, w^{-i}), k(w))$ is
maximized when the agent is truthful. This implies that VCG
mechanisms are truthful but this truthfulness is not necessarily
preserved by VCG-based mechanisms.

\subsubsection{Example: Non Optimal Vickrey Auction}
\label{propblemsNonOpt}

This section demonstrates the problems that might occur when the
optimal algorithm in a VCG mechanism is replaced by a sub-optimal
one. Consider the sale of a single item. As we already commented,
a Vickrey auction is a VCG mechanism. Its algorithm allocates the
item to the agent with the highest declared value. The function
$h^i(w^{-i}) = -\sum_{j \neq i}w^j(o)$ equals the negation of the
second highest value in case $i$ is winning.

Consider the same mechanism where the optimal algorithm is
replaced by an algorithm that only chooses the second highest
agent. The mechanism will now give the object to the agent with
the second highest declaration for the price of the third highest
agent.

Suppose that there are three agents. Alice who has a value of
$\$2$ million, Bob with a value of $\$1.7$ million, and Carol who
has a value of $\$1$ million. When the agents are truthful, Bob
wins and pays $\$1$ million. In this case it is for Alice's
benefit to reduce her declaration below Bob's. Similarly, if Alice
wins, Bob would like to lower his declaration further, and so on.
Note that there are natural situations where Carol will win as
well.

It is not difficult to see that there are no dominant strategies
in this game. The outcome of the mechanism is highly
unpredictable, depending heavily on the agents' beliefs about the
others, their risk attitude, and their level of sophistication.
Such a mechanism can yield inefficient outcomes. The efficiency
loss may get much worse when the underlying optimization problem
has a complex combinatorial structure (simulations that
demonstrate this in the context of scheduling were done by
\shortciteR{grussoCArrol05}).

\subsubsection{Affine-based mechanisms}
\label{LaffineBasedSec} It is possible to slightly generalize the
class of VCG mechanisms and obtain mechanisms called {\em affine
maximizers}. Such mechanisms maximize affine transformations of
the valuations. When the domain of valuations is unrestricted,
affine maximizers are the sole truthful mechanisms
\cite{Rob77,LNM03}. Similarly to VCG, we generalize these
mechanisms to incorporate sub-optimal algorithms.

\notation\ Let $a = (a_0, \ldots, a_n)$ be an $n+1$-tuple such
that $a_0(.)$ is a valuation function, and $a_1, \ldots, a_n$ are
strictly positive. We define the {\em weighted welfare} $g_a(w,o)$
of an output $o$ as $a_0(o) + \sum_{i > 0} a_i \cdot w_i(o)$ where
$w$ is a vector of types and $o$ an output.

\begin{definition}
\label{LaffineBasedDef} {\bf (affine-based mechanism)} Let $k(w)$
be an algorithm that maps type declarations into allowable
outputs, $a = (a_0, \ldots, a_n)$ be an $n+1$-tuple such that
$a_0(.)$ is a valuation function, and $a_1, \ldots, a_n$ are
strictly positive. We call $m = (k(w), p(w))$ an {\em affine
mechanism based on $k$} if $p$ is calculated according to the
formula: $p^i(w) = \frac{1}{a_i}( \sum_{j \neq i,0} a_j \cdot
w^j(k(w)) + h^i(w^{-i}))$ (where $h^i()$ is an arbitrary function
of $w^{-i}$).
\end{definition}

\smallspace The function $a_0(.)$ can be interpreted as the
preferences of the mechanism over the set of the alternatives and
$a_1, \ldots, a_n$ as weights over the agents. As in VCG
mechanisms, the agents' utility have a convenient
characterization.

\begin{lemma}\label{LAffineUtil}{\bf (affine-based utility)} Consider an affine-based
mechanism defined by the allocation algorithm $k(.)$, a tuple $a$
and the functions $h^1(.), \ldots, h^n(.)$. Suppose that the {\bf
actual} valuation of agent $i$ is $v^i$, and the declarations are
$w = (w^1(.), \ldots, w^n(.))$. Then the {\em utility} of agent
$i$ equals $\frac{1}{a_i}(g_a((v^i, w^{-i}), k(w)) +
h^i(w^{-i}))$.
\end{lemma}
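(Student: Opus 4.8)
The plan is to mirror exactly the proof of Lemma~\ref{LLemmaVCGutil}, simply substituting the affine payment formula for the VCG one and keeping careful track of the weights $a_i$. This is a direct computation from the definitions, so the "proof proposal" is really just an indication of which terms cancel and combine.

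First I would write down the utility of agent $i$ under quasi-linearity: since agent $i$'s actual valuation is $v^i$, its utility is $v^i(k(w)) + p^i(w)$. Next I would substitute the affine-based payment from Definition~\ref{LaffineBasedDef}, namely $p^i(w) = \frac{1}{a_i}\bigl( \sum_{j \neq i,0} a_j \cdot w^j(k(w)) + h^i(w^{-i})\bigr)$. The one bookkeeping subtlety is that the $v^i(k(w))$ term sits outside the $\frac{1}{a_i}(\cdot)$ bracket, so I would pull it inside by writing $v^i(k(w)) = \frac{1}{a_i}\bigl(a_i \cdot v^i(k(w))\bigr)$. Then the bracketed expression becomes $a_i \cdot v^i(k(w)) + \sum_{j \neq i,0} a_j \cdot w^j(k(w)) + h^i(w^{-i})$.

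Now I would recognize that $a_i \cdot v^i(k(w)) + \sum_{j \neq i,0} a_j \cdot w^j(k(w))$ is precisely $\sum_{j > 0} a_j \cdot (v^i, w^{-i})^j(k(w))$ — that is, the weighted sum over all agents evaluated at the profile where agent $i$ reports truthfully and the others report $w^{-i}$. Adding the $a_0(k(w))$ term, which does not depend on the declarations at all, gives exactly $g_a((v^i, w^{-i}), k(w))$ by the definition of weighted welfare in the notation preceding Definition~\ref{LaffineBasedDef}. Hence the utility equals $\frac{1}{a_i}\bigl(g_a((v^i, w^{-i}), k(w)) + h^i(w^{-i})\bigr)$, as claimed.

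There is no real obstacle here; the only thing one must be slightly careful about is the indexing convention in the definition of $g_a$, where the sum over agents starts at $i>0$ and $a_0$ indexes the mechanism's own preference term rather than an agent. As long as one keeps $a_0(o)$ separate and notes that it is unaffected by whether agent $i$ is truthful, the identification with $g_a$ is immediate. I would present this as a two-line display computation analogous to the proof of Lemma~\ref{LLemmaVCGutil}, flagging the "multiply and divide by $a_i$" step as the only non-mechanical move.
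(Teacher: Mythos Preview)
Your proposal is correct and takes essentially the same approach as the paper: a direct computation from the definitions, with the one non-mechanical step being to rewrite $v^i(k(w))$ as $\frac{1}{a_i}\bigl(a_i\,v^i(k(w))\bigr)$ so that the entire expression can be identified with $\frac{1}{a_i}\,g_a((v^i,w^{-i}),k(w))$ plus the $h^i$ term. The paper's proof is exactly the one-line display you describe.
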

\begin{proof} The proof is immediate from the definitions. The
agent's utility equals $v^i(k(w) + p^i(w)) = \frac{1}{a_i}(a_i
v^i(k(w)) + p^i(w) = \frac{1}{a_i}(g_a((v^i, w^{-i}), k(w)) +
h^i(w^{-i}))$.
\end{proof}

\smallspace In other words, an affine-based mechanism identifies
the agents' utility with the affine transformation of the
valuations it aims to optimize. In particular, when $k(.)$
maximizes $g_a(w,.)$, the mechanism is truthful.

\subsection{Computational Considerations in Mechanism Design}
\label{Computationalconsiderations}
This section adopts standard notions of computational complexity
to revelation mechanisms.

\begin{definition}{\bf (polynomial mechanism)}
A mechanism $(k,p)$ is called {\em polynomial time computable} if
both $k(w)$ and $p(w)$ run in polynomial time (in the size of
$w$).
\end{definition}

Note that a \vcgbased\ mechanism is polynomial iff its output
algorithm and the functions $h^i(.)$ are polynomial. We sometimes
call polynomial algorithms and mechanisms computationally
feasible.

\begin{definition}{\bf (NP-complete problem)}
A mechanism design problem is called {\em NP-Complete} if the
problem of finding an output that maximizes the total welfare is
NP-Complete.
\end{definition}

We use the term {\em feasible} to denote ``acceptable''
computational time and {\em infeasible} for unacceptable
computational time. In particular, NP-hard problems and
exponential algorithms are considers infeasible, while polynomial
algorithms are considered feasible. We use these non-standard
terms because most of our results are not limited to specific
complexity classes.

\subsection{Example: Combinatorial Auctions}
\label{example}

The problem of combinatorial auctions has been extensively studied
in recent years (a recent book can be found at
\shortciteR{cabook06}). The importance of this problem is twofold.
Firstly, several important applications rely on it (e.g., the FCC
auction \shortciteR{cra97}). Secondly, it is a generalization of
many other problems of interest, in particular in the field of
electronic commerce.

\smallspace {\bf The problem:} A seller wishes to sell a set $S$
of items (radio spectra licenses, electronic devices, etc.) to a
group of agents who desire them. Each agent $i$ has, for every
subset $s \subseteq S$ of the items, a non-negative number
$v^i(s)$ that represents how much $s$ is worth for it. $v^i(.)$ is
privately known to each agent $i$. We make two standard additional
assumptions on the agents' type space:

\begin{description}

\item [No externalities] The valuation of each agent $i$ depends
only on the items allocated to it. In other words, for every two
allocations $x = (x_1, \ldots, x_n)$ and $y = (y_1, \ldots, y_n)$,
if $x_i = y_i$, then $v^i(x) = v^i(y)$. Thus, we denote the
valuation of each agent $i$ by $v^i:2^S \rightarrow R$.

\item [Free disposal] Items have non-negative values, i.e., if $s
\subseteq t$, then $v^i(s) \leq v^i(t)$. Also, $v^i(\phi) = 0$.
\end{description}

\smallspace Items can either be complementary, i.e., $v^i(S
\bigcup T) \geq v^i(S) + v^i(T)$, or substitutes, i.e., $v^i(S
\bigcup T) \leq v^i(S) + v^i(T)$ (for disjoint $S$ and $T$). For
example, a buyer may be willing to pay $\$200$ for a T.V. set,
$\$150$ for a VCR, $\$450$ for both and only $\$200$ for two VCRs.

If agent $i$ gets the set $s^i$ of items, and its payment is
$p^i$, its utility is  $v^i(s^i) + p^i$. (The payments in
combinatorial auctions are non-positive.) This utility is what
each agent tries to optimize. For example, an agent prefers to buy
a $\$1000$ valued VCR for $\$600$, gaining $\$400$, rather buy a
$\$1500$ valued VCR for $\$1250$.

In a VCG mechanism for a combinatorial auction, the participants
are first required to reveal their valuation functions to the
mechanism. The mechanism then computes, according to the agents'
declarations, an allocation $s$ that maximizes the total welfare.
The payment for each of the agents is calculated according to the
VCG formula. By Lemma \ref{LLemmaVCGutil}, the utility $u^i =
v^i(s^i) + p^i$ of each of the agents is maximized when it reveals
its true valuation to the mechanism. When all agents are truthful,
the mechanism maximizes the total welfare.

Consider, however, the computational task faced by such a
mechanism. After the types are declared, the mechanism needs to
select, among all possible allocations, one that maximizes the
total welfare. This problem is known to be NP-Complete. Therefore,
unless the number of agents and items is small, such a mechanism
is computationally infeasible. Even the problem of finding an
allocation that approximates the optimal allocation within a
reasonable factor of $\sqrt{|S|} - \epsilon$ is $NP$-Complete
\cite{zuckerman06}. Nevertheless, various heuristics and tractable
sub-cases have been analyzed in the literature \cite[Chapter
13]{cabook06}. We would like to find a way to turn these
sub-optimal algorithms into mechanisms.

We note that, in general, revealing a valuation function requires
exponential communication. While we ignore communication issues in
this paper, a subsequent work \cite{R01b} extends the second
chance method to address communication limitations as well.


\section{Limitations of Truthful \vcgbased\ Mechanisms}
\label{limitations}

This section studies the limitations of truthful \vcgbased\
mechanisms. Section \ref{tvbcomb} characterizes these mechanisms
for the important problem of combinatorial auctions (see Section
\ref{example}). This characterization precludes the possibility of
obtaining truthfulness by applying VCG rules to many of the
proposed heuristics for combinatorial auctions (e.g., the greedy
algorithms of \shortciteR{lsl99} and \shortciteR{NN00}). Moreover,
we show that any truthful non-optimal \vcgbased\ mechanism for
combinatorial auctions suffers from abnormal behavior. Section
\ref{campsec} shows that for many natural cost minimization
problems, any truthful \vcgbased\ mechanism is either optimal or
produces results that are arbitrarily far from optimal. As a
result, when such a problem is computationally intractable, any
truthful computationally feasible \vcgbased\ mechanism has inputs
that cause it to produce degenerate results. Furthermore, since
standard algorithmic techniques do not yield such anomalies, it
might be difficult to develop algorithms that can be plugged into
truthful mechanisms. We generalize these results to affine-based
mechanisms as well.

\subsection{Truthful \vcgbased\ Mechanisms for Combinatorial Auctions}
\label{tvbcomb}

This section characterizes the class of truthful \vcgbased\
mechanisms for combinatorial auctions.

\begin{definition}
\label{maxalg} {(\bf maximal in its range)} Let $k(w)$ be an
algorithm that maps type declarations into allowable outputs. Let
$V { \stackrel { df } { = } } \prod_{i = 1}^n V^i$ be the space of
all possible types and let $V' \subseteq V$ be a subspace of $V$.
Let ${\cal{O}}$ denote the range of $k$ at $V'$, i.e. ${\cal{O}} =
\{ k(w) | w \in V' \} $. We say that $k$ is {\em \maximizer\ at
$V'$} if for every type $w \in V'$, $k(w)$ maximizes $g$ over
${\cal{O}}$. We say that $k$ is {\em \maximizer\ } if it is
\maximizer\ at $V$.
\end{definition}

Consider, for example, an algorithm for combinatorial auctions
that allocates all the items (the set $S$) to the agent with the
highest valuation $v^i(S)$. Clearly, this polynomial time
algorithm is \maximizer\ . The welfare obtained by the allocation
of this algorithm achieves at least a factor of $\max(1/n, 1/|S|)$
of the optimal welfare (where $n$ denotes the number of agents).

\begin{proposition}\label{maxRangeTruthful}
A \vcgbased\ mechanism with an output algorithm that is
\maximizer\ is truthful.
\end{proposition}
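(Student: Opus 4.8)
The plan is to reduce truthfulness of a maximal-in-its-range VCG-based mechanism to the standard truthfulness of an ordinary VCG mechanism, by viewing the restricted range $\mathcal{O} = \{k(w)\mid w\in V\}$ as a ``new'' output set and $k$ as an exact optimizer over it. Concretely, fix an agent $i$, its true type $v^i$, and an arbitrary declaration vector $w^{-i}$ for the others. By Lemma~\ref{LLemmaVCGutil}, agent $i$'s utility when it declares $w^i$ is $g((v^i,w^{-i}),k(w^i,w^{-i})) + h^i(w^{-i})$. Since the additive term $h^i(w^{-i})$ does not depend on $w^i$, it suffices to show that declaring $v^i$ maximizes $g((v^i,w^{-i}),k(w^i,w^{-i}))$ over all possible $w^i$.

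The key observation is that, whatever $w^i$ agent $i$ declares, the resulting output $k(w^i,w^{-i})$ lies in the range $\mathcal{O}$, and because $k$ is maximal in its range, the output $o^* := k(v^i,w^{-i})$ obtained under truthful reporting satisfies $g((v^i,w^{-i}),o^*) \geq g((v^i,w^{-i}),o)$ for \emph{every} $o\in\mathcal{O}$ --- here I am using that $g((v^i,w^{-i}),\cdot)$ is exactly the welfare function $g$ evaluated at the type profile $(v^i,w^{-i})$, and maximality in range at the profile $(v^i,w^{-i})$ says $k(v^i,w^{-i})$ maximizes this very function over $\mathcal{O}$. In particular, taking $o = k(w^i,w^{-i})\in\mathcal{O}$ for any alternative declaration $w^i$, we get $g((v^i,w^{-i}),k(v^i,w^{-i})) \geq g((v^i,w^{-i}),k(w^i,w^{-i}))$, which is precisely the inequality needed. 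Adding back $h^i(w^{-i})$ to both sides gives that truthful reporting maximizes agent $i$'s utility.

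One subtlety worth spelling out: the definition of maximal in its range quantifies over types $w\in V$ and asks that $k(w)$ maximize $g(w,\cdot)$ over $\mathcal{O}$. When I invoke this at the profile $(v^i,w^{-i})$, I need $(v^i,w^{-i})$ to be a legal type profile in $V$, which it is since $v^i\in V^i$ is agent $i$'s actual type and each $w^j\in V^j$; and I need the range computed from \emph{all} of $V$ to contain $k(w^i,w^{-i})$ for the particular $w^i$ under consideration, which is immediate since $(w^i,w^{-i})\in V$. So no restriction-of-range issue arises here; the argument uses maximality at $V$ directly. I do not anticipate a real obstacle --- the proposition is essentially a restatement of why VCG works, with ``optimal over $O$'' replaced by ``optimal over $\mathcal{O}$'' --- so the main thing to get right is bookkeeping: cleanly separating the $w^i$-independent term $h^i(w^{-i})$ and correctly matching the welfare function in Lemma~\ref{LLemmaVCGutil} with the function $g$ appearing in the maximal-in-range condition.

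\begin{proof}
Let $k(\cdot)$ be maximal in its range and let $\mathcal{O} = \{k(w)\mid w\in V\}$ be its range. Fix an agent $i$ with true type $v^i$ and an arbitrary declaration vector $w^{-i}$ for the other agents. By Lemma~\ref{LLemmaVCGutil}, if agent $i$ declares $w^i$ its utility is
\[
g\big((v^i,w^{-i}),\,k(w^i,w^{-i})\big) + h^i(w^{-i}).
\]
Since $h^i(w^{-i})$ is independent of $w^i$, it suffices to show that $g\big((v^i,w^{-i}),\,k(w^i,w^{-i})\big)$ is maximized (over all choices of $w^i$) when $w^i = v^i$.

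Because $k$ is maximal in its range at $V$, applied to the type profile $(v^i,w^{-i})\in V$, the output $k(v^i,w^{-i})$ maximizes $g\big((v^i,w^{-i}),\cdot\big)$ over $\mathcal{O}$. Now for any alternative declaration $w^i$, the profile $(w^i,w^{-i})$ lies in $V$, so $k(w^i,w^{-i})\in\mathcal{O}$. Hence
\[
g\big((v^i,w^{-i}),\,k(v^i,w^{-i})\big) \;\geq\; g\big((v^i,w^{-i}),\,k(w^i,w^{-i})\big).
\]
Adding $h^i(w^{-i})$ to both sides shows that agent $i$'s utility under truthful reporting is at least its utility under any other declaration. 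Since $i$ and $w^{-i}$ were arbitrary, the mechanism is truthful.
\end{proof}
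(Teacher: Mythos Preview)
Your proof is correct and follows essentially the same approach as the paper: both observe that a maximal-in-its-range algorithm makes the mechanism an ordinary VCG mechanism over the restricted output set $\mathcal{O}$, and both appeal to Lemma~\ref{LLemmaVCGutil} to conclude truthfulness. The paper states this in a single sentence, while you spell out the utility comparison explicitly, but the underlying argument is the same.
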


\begin{proof}
Such a mechanism is a VCG mechanism where the set of allowable
outputs is the range of its output algorithm. By Lemma
\ref{LLemmaVCGutil} such a mechanism is truthful.
\end{proof}

\smallspace We will now show that the above proposition almost
characterizes the class of truthful \vcgbased\ mechanisms for the
combinatorial auction problem.

\notation\ We let $\tilde{V}$ denote the space of all types $v =
(v^1, \ldots, v^n)$ such that for any two different allocations
$x$ and $y$, $g(v,x) \neq g(v,y)$. (Recall that $g(.)$ denotes the
total welfare.)

\smallspace It is not difficult to see that $\tilde{V}$ contains
almost all the types, i.e. $V - \tilde{V}$ has a measure zero in
$V$.

\begin{theorem}
\label{CharacterizationComb} If a  \vcgbased\ mechanism for the
combinatorial auction problem is truthful, then its output
algorithm is \maximizer\ at $\tilde{V}$.
\end{theorem}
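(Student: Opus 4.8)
The plan is to reduce truthfulness to a statement about single-agent deviations (``axis optimality''), and then to amplify this to optimality over the whole range by an inductive hybrid argument that exploits the additive, no-externalities structure of combinatorial auctions.

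\emph{Step 1 (truthfulness $\Rightarrow$ axis optimality).} Fix an agent $i$ and a declaration vector $w^{-i}$ for the others. By Lemma~\ref{LLemmaVCGutil}, when agent $i$'s true type is $v^i$ and it declares $u^i$, its utility is $g((v^i,w^{-i}),k(u^i,w^{-i})) + h^i(w^{-i})$, and since the term $h^i(w^{-i})$ does not depend on $u^i$, truthfulness forces $g((v^i,w^{-i}),k(v^i,w^{-i})) \ge g((v^i,w^{-i}),k(u^i,w^{-i}))$ for all $u^i$. Writing $w$ for the profile $(v^i,w^{-i})$, this says: for every profile $w$ and every agent $i$, $k(w)$ maximizes $g(w,\cdot)$ over the \emph{axis range} $R_i(w^{-i}) := \{\,k(u^i,w^{-i}) : u^i \in V^i\,\}$ of outputs reachable by changing only agent $i$'s declaration. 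Note that this holds on all of $V$, not merely on $\tilde{V}$, and is nothing more than a restatement of truthfulness.

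\emph{Step 2 (axis optimality $\Rightarrow$ full-range optimality).} Fix $v \in \tilde{V}$; we must show $g(v,k(v)) \ge g(v,k(v'))$ for every $v' \in \tilde{V}$, which is exactly the assertion that $k$ is \maximizer\ at $\tilde{V}$. I would proceed by induction on the number $m$ of coordinates in which $v'$ differs from $v$. For $m \le 1$ the claim is immediate from Step~1, since then $k(v')$ lies in $R_i(v^{-i})$ for the single differing coordinate $i$ (or $k(v')=k(v)$). For the inductive step, pick a differing coordinate $j$ and form the hybrid profile $\bar v$ that agrees with $v'$ on coordinate $j$ and with $v$ on every other coordinate. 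Then $\bar v$ differs from $v$ in exactly the coordinate $j$ --- so Step~1 gives $g(v,k(v)) \ge g(v,k(\bar v))$ because $k(\bar v) \in R_j(v^{-j})$ --- while $\bar v$ differs from $v'$ in only $m-1$ coordinates, so the inductive hypothesis controls $k(\bar v)$ versus $k(v')$, but under the welfare function $g(\bar v,\cdot)$. Since $g(v,\cdot)$ and $g(\bar v,\cdot)$ differ only in agent $j$'s summand, the no-externalities assumption makes that difference, evaluated at any output $o$, depend only on the bundle $o$ assigns to agent $j$. Combining the axis-optimality inequalities of Step~1 with the inductive hypotheses --- used for every choice of the flipped coordinate $j$, and bounding the relevant welfares from both sides ($g(v,\cdot)$ and $g(v',\cdot)$) --- forces $g(v,k(v)) \ge g(v,k(v'))$. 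Carrying the induction up to $m=n$ proves the theorem.

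The crux is Step~2. Axis optimality is naturally phrased relative to the welfare function of the \emph{current} profile, but the welfare function changes along a hybrid, so the inequalities do not simply telescope: a naive comparison along a single interpolation path between $v'$ and $v$ does not close up, and one has to juggle the constraints coming from several hybrids at once. A further technical nuisance is that the hybrid profiles $\bar v$ need not lie in $\tilde{V}$, whereas the inductive hypothesis wants generic arguments; this is handled by a perturbation argument, legitimate because $V \setminus \tilde{V}$ has measure zero and the output set $O$ is finite, so only finitely many strict inequalities are at stake and a sufficiently small generic perturbation preserves all of them. Making these two points mesh cleanly is precisely the delicate hybrid argument promised in the introduction.
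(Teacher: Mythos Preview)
Your Step~1 is fine and matches the paper. Step~2, however, is not a proof but a description of a difficulty, and the approach you sketch cannot be closed in the way you suggest. Concretely: along a hybrid path $v \to \bar v \to v'$ with, say, $\bar v=(v'^{\,j},v^{-j})$, axis optimality gives $g(v,k(v))\ge g(v,k(\bar v))$ and $g(\bar v,k(\bar v))\ge g(\bar v,k(v'))$; summing and using no externalities leaves a residual $(v^{j}-v'^{\,j})\big(k(\bar v)^{j}\big)-(v^{j}-v'^{\,j})\big(k(v')^{j}\big)$ whose sign you cannot control. Trying ``every choice of the flipped coordinate $j$'' and ``both sides'' produces more such residuals of indeterminate sign; nothing forces them to cancel. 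You correctly diagnose that inequalities do not telescope, but you do not supply the missing mechanism that makes them telescope.

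The paper supplies that mechanism by abandoning inequalities and proving \emph{equalities} along the hybrid. Given $v\in\tilde V$ with $k(v)$ suboptimal in the range, let $y$ be the range-optimal output and pick $w\in\tilde V$ with $k(w)=y$. Now introduce an auxiliary profile $z$ in which each agent $i$ places a huge value $\alpha$ on any bundle containing $y^{i}$ and otherwise agrees with $v^{i}$. This forcing construction, together with free disposal, pins down the output exactly: moving from $w$ to $z$ one coordinate at a time, truthfulness plus the fact that the deviating agent must receive a superset of $y^{i}$ (else it loses $\alpha$) forces $k(\cdot)=y$ at every hybrid, so $k(z)=y$. Then moving from $z$ back to $v$ one coordinate at a time, one checks that $y$ remains the unique range-optimum for each hybrid (here $\tilde V$ and the small-perturbation trick are used), and truthfulness again forces $k(\cdot)=y$ at each step, yielding $k(v)=y$, a contradiction. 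The two ingredients your sketch lacks are precisely this forcing profile $z$ (which requires unbounded valuations and free disposal) and the observation that the hybrid preserves an identity $k=y$ rather than a mere welfare inequality.
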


\begin{proof}
Assume by contradiction that  $m = (k, p)$ is truthful but $k(.)$
is not \maximizer\ at $\tilde{V}$. Since the functions $h^i(.)$ do
not affect the truthfulness of the mechanism, we can assume that
they are all zero, i.e., we assume that for all $i$, $p^i(w) =
\sum_{j \neq i} w^j(k(w))$. According to Lemma
\ref{LLemmaVCGutil}, the utility of each agent $i$ equals
$v^i(k(w)) + \sum_{j \neq i} w^j(k(w)) = g((v^i, w^{-i}), k(w))$.

Let ${\cal{O}}$ denote the range of $k(.)$ at $\tilde{V}$ and let
$v \in \tilde{V}$ be a type such that $k(v)$ is not optimal over
${\cal{O}}$. Let  $y = \arg\max_{o \in \cal{O}} g(v,o)$ be the
optimal allocation among ${\cal{O}}$. Note that from the
definition of $\tilde{V}$, $y$ is unique. Finally, let $w \in
\tilde{V}$ be a type such that $y = k(w)$. Such a type exists
since $y$ is in the range of the algorithm.

Define a type vector $z$ by

\[ z^i(s) = \left\{ \begin{array}{ll}
  v^i(s)  & \mbox{if $s \not\supseteq y^i$} \\
  \alpha        & \mbox{if $s \supseteq y^i$.}
\end{array}
\right. \]

\noindent where $\alpha$ stands for a sufficiently large number.
In other words, each agent $i$ strongly desires the set $y^i$.
Apart from this, $v^i$ and $z^i$ are identical. We assume that $z
\in \tilde{V}$. Otherwise we could add sufficiently small
``noise'' $\epsilon^i(s)$ to $z$ such that all the claims below
remain true.

We will show that $z$ ``forces'' the algorithm to output $y$. We
will then show that if the algorithm outputs $y$ when the type is
$z$, it must also output $y$ when the type is $v$ -- a
contradiction.

\begin{lemma}
\label{charlemma} $y = k(z)$.
\end{lemma}

\begin{proof}
Define a sequence of type vectors by:

\[ \begin{array}{ll}
  w_0 =   & (w^1, \ldots, w^n)  \\
  w_1 =      & (z^1, w^2, \ldots, w^n)  \\
  w_2 =      & (z^1, z^2,w^3, \ldots, w^n)  \\
\vdots \\
  w_n =  & (z^1, \ldots, z^n).  \\
\end{array}
\]

\noindent In other words, every agent in turn moves from $w^i$ to
$z^i$ . We assume that $w_j \in \tilde{V}$ for all $j$. It is not
difficult to see that $z$ can be modified by adding small noise to
it, in a way that guarantees the above.

\begin{claim}
$k(w_1) = y$.
\end{claim}

\begin{proof}
Assume by contradiction that this is false. From the definition of
$\tilde{V}$ we obtain that $g(w_1, k(w_1)) \neq g(w_1, y)$.

Consider the case where agent $1$'s type is $z^1$ and the types of
the others are $w^2, \ldots, w^n$. By declaring $w^1$, agent $1$
can force the algorithm to decide on $y$. Since the mechanism is
truthful, it must be that $g(w_1, k(w_1)) > g(w_1, y)$.

Since $\alpha$ is large, it must be that $k^1(w_1) \supseteq y^1$
(i.e., agent 1 gets all the items it gets when its type is $w^1$).
Thus, from the definition of $z^1$, we obtain $\alpha + \sum_{j =
2}^n w^j(k(w_1)) > \alpha + \sum_{j = 2}^n w^j(y)$. Because, due
to the free disposal assumption, $w^1(k(w_1)) \geq w^1(y)$, we
obtain that $w^1(k(w_1)) + \sum_{j = 2}^n w^j(k(w_1))
> w^1(y) + \sum_{j = 2}^n w^j(y)$ (even when $z$ is perturbed). Thus, $g(w_0, k(w_1))
> g(w_0, y)$.

Therefore, when the type of agent $1$ is $w^1$, it is better off
declaring $z^1$, forcing the mechanism to output $k(w_1)$. This
contradicts the truthfulness of the mechanism.
\end{proof}

\smallspace Similarly, by induction on $j$, we obtain that $k(w_j)
= y$ for all $j$, and in particular for $w_n = z$. This completes
the proof of Lemma \ref{charlemma}.
\end{proof} 

\smallspace We will now show that $k(z) = y$ implies that $k(v) =
y$ -- a contradiction. Consider the following sequence of type
vectors:

\[ \begin{array}{ll}
  v_0 =   & (v^1, \ldots, v^n)  \\
  v_1 =      & (z^1, v^2, \ldots, v^n)  \\
\vdots \\
  v_n =  & (z^1, \ldots, z^n).  \\
\end{array}
\]

\noindent In other words, every agent in turn, moves from $v^i$ to
$z^i$. Again we can choose $z$ such that all $v_j$s are in
$\tilde{V}$.

\begin{claim}
For all $v_j$, $y$ maximizes $g$ on ${\cal{O}}$.
\end{claim}

\begin{proof}
We will show this for $v_1$. The proof for $j > 1$ follows from a
similar argument. Assume by contradiction that $x \neq y$
maximizes the welfare for $v_1$. Since $\alpha$ is arbitrarily
large it must be that $x^1 \supseteq y^i$. Consequently, in both
cases agent 1's valuation equals $\alpha$.

Recall that $y$ uniquely maximizes $g$ on ${\cal{O}}$ for $v_0$.
Thus, for every allocation $x \neq y$, we have $v^1(y) + \sum_{j =
2}^n v^j(y) > v^1(x) + \sum_{j = 2}^n v^j(x)$. Therefore, $\alpha
+ \sum_{j = 2}^n v^j(y) > \alpha + \sum_{j = 2}^n v^j(x)$. But the
left hand side equals $g(v_1,y)$ and the right hand side equals
$g(v_1,x)$. Thus, $g(v_1,y) > g(v_1,x)$ -- contradiction.
\end{proof}

\begin{claim}
$k(v_{n-1}) = y$.
\end{claim}

\begin{proof}
We showed that $k(v_n) = y$. (Recall that $v_n = z$.) We also
showed that $y$ uniquely maximizes $g(v_{n-1},.)$. Let $x_{n-1} =
k(v_{n-1})$. Assume by contradiction that $x_{n-1} \neq y$.
According to Lemma \ref{LLemmaVCGutil}, the utility of agent $n$
when it is truthful is $g(v_{n-1}, x_{n-1})$. Thus, when agent
$n$'s type is $v^n$, it is better off declaring $z^n$ obtaining a
utility of $g(v_{n-1}, y)$. This contradicts the truthfulness of
the mechanism.
\end{proof}

\smallspace Similarly, by downward induction on $j$, we obtain
that $k(v_0) = y$. But $v_0 = v$ and we assumed that $k(v) \neq y$
-- a contradiction. This completes the proof of Theorem
\ref{CharacterizationComb}.
\end{proof} 

\smallspace {\bf Remarks} The above theorem characterizes the
output algorithms that could be incorporated into truthful
\vcgbased\ mechanisms on all but a zero-measured subset of the
types. This characterization holds even when the set of possible
types is discrete (under the mild condition that the type vector
$z$ can be defined such that the agents are not indifferent
between allocations). The theorem gives rise to several
interesting algorithmic and combinatorial questions. For example,
given an approximation factor $c \leq 1$, what is the minimal size
of a sub-family ${\cal{O}} \subseteq O$ such that for every $v$,
$\max_{y \in {\cal{O}}} g(v, y) \geq c \cdot g_{opt}(v)$? A
limited version of this question was analyzed by \shortciteR{RNDM}
and \shortciteR{RD02}.

\begin{corollary}
\label{charcomp} Consider a \vcgbased\ mechanism for a
combinatorial auction with an output algorithm $k$. If the
mechanism is truthful, there exists an output algorithm
$\tilde{k}$, \maximizer , such that for every $v$, $g(v, k(v)) =
g(v, \tilde{k}(v))$.
\end{corollary}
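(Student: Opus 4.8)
\noindent The plan is to take $\tilde k$ to be the algorithm that optimizes over the range of $k$ on $\tilde V$, verify directly that it is \maximizer, and then show it has the same welfare as $k$ on \emph{all} types. Let ${\cal O} := \{k(w)\mid w\in\tilde V\}$, the range of $k$ at $\tilde V$ (a subset of the finite set $O$), and fix an arbitrary linear order on $O$. Define $\tilde k(v)$ to be the ${\cal O}$-allocation maximizing $g(v,\cdot)$, with ties broken by the fixed order, so that $\tilde k$ is total. By construction the range of $\tilde k$ is contained in ${\cal O}$ and $\tilde k(v)$ maximizes $g(v,\cdot)$ over ${\cal O}$ --- hence over the range of $\tilde k$ --- for every $v$, so $\tilde k$ is \maximizer\ in the sense of Definition~\ref{maxalg}. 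By Theorem~\ref{CharacterizationComb}, $k(v)$ also maximizes $g(v,\cdot)$ over ${\cal O}$ for every $v\in\tilde V$, whence $g(v,k(v))=\max_{o\in{\cal O}}g(v,o)=g(v,\tilde k(v))$ on $\tilde V$; this already settles the corollary for all but a measure-zero set of types.

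The remaining work is to extend the equality to $v\in V\setminus\tilde V$, where the maximizer of $g(v,\cdot)$ over ${\cal O}$ need not be unique and Theorem~\ref{CharacterizationComb} cannot be invoked directly. I would do this in two halves. For $g(v,k(v))\le g(v,\tilde k(v))$ it suffices to show $k(v)\in{\cal O}$ for \emph{every} $v$: set $y:=k(v)$ and, exactly as in Lemma~\ref{charlemma}, define $z\in\tilde V$ (a small perturbation of $v$) in which every agent $i$ strongly desires its bundle $y^i$. The hybrid argument of that lemma, started from the configuration $v$ itself (which outputs $y$) and moving each agent in turn from $v^i$ to $z^i$, then shows $k(z)=y$; since $z\in\tilde V$ this forces $k(v)=y\in{\cal O}$. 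That argument uses only that the starting configuration outputs $y$ together with free disposal, so it is insensitive to whether $v\in\tilde V$.

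For the reverse inequality $g(v,k(v))\ge g(v,\tilde k(v))$ I would replay the second half of the proof of Theorem~\ref{CharacterizationComb}. Pick any maximizer $y\in\arg\max_{o\in{\cal O}}g(v,o)$ and, using $y\in{\cal O}$, a type $w\in\tilde V$ with $k(w)=y$; take $z$ as above and apply Lemma~\ref{charlemma} to get $k(z)=y$. Then walk the hybrid $v_0=v,\ v_1=(z^1,v^2,\dots,v^n),\dots,\ v_n=z$ downward, choosing each $v_j$ in $\tilde V$ by small noise. For each $j\ge1$ one checks, as in the ``$y$ maximizes $g$ on ${\cal O}$'' claim of that proof, that $y$ is the \emph{unique} ${\cal O}$-maximizer at $v_j$: any rival $x\in{\cal O}$ must give every already-switched agent a bundle $\supseteq y^i$ (else it forfeits an $\alpha$ term), and then free-disposal monotonicity together with $g(v,y)\ge g(v,x)$ forces $g(v_j,y)\ge g(v_j,x)$, strictly since $v_j\in\tilde V$. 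Downward induction, using truthfulness of each deviation $v^j\mapsto z^j$ and this uniqueness, propagates $k(v_n)=y$ to $k(v_1)=y$, and truthfulness of agent $1$'s deviation $v^1\mapsto z^1$ at the last step gives $g(v,k(v))\ge g(v,k(v_1))=g(v,y)=\max_{o\in{\cal O}}g(v,o)$. Combining the two inequalities yields $g(v,k(v))=g(v,\tilde k(v))$ for every $v$.

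The main obstacle is exactly this treatment of the non-generic types: one has to notice that the two hybrid constructions of Section~\ref{tvbcomb} never really needed the \emph{starting} type to lie in $\tilde V$ (only the perturbed intermediate types do), and that the ``unique maximizer'' step still goes through when $y$ is merely \emph{some} welfare-optimal allocation over ${\cal O}$ at $v$ rather than the unique one --- which is where free disposal is used. If one is content with the conclusion ``for almost every $v$'', the corollary is immediate from Theorem~\ref{CharacterizationComb} and the construction in the first paragraph.
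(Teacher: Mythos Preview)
Your construction of $\tilde k$ and the verification on $\tilde V$ match the paper exactly. Where you diverge is in the extension to non-generic types $v\in V\setminus\tilde V$: the paper does \emph{not} replay the hybrid machinery there. Instead it argues that both $v\mapsto g(v,\tilde k(v))=\max_{o\in{\cal O}}g(v,o)$ and $v\mapsto g(v,k(v))$ are continuous, and since they agree on the dense set $\tilde V$ they agree everywhere. The continuity of the first map is immediate (a maximum of finitely many linear functions); the continuity of the second is asserted to follow from the fact that, by Lemma~\ref{LLemmaVCGutil}, each agent's utility under truth-telling coincides with $g(v,k(v))$, and truthfulness forces this to be a pointwise supremum of linear functions of each $v^i$.

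Your route is genuinely different and arguably more informative: by rerunning Lemma~\ref{charlemma} from an arbitrary starting profile you actually prove the stronger fact that $k(v)\in{\cal O}$ for \emph{every} $v$ (not merely that the welfares coincide), and your downward hybrid gives the matching inequality without appealing to any analytic property of $k$. The key observation you isolate --- that only the \emph{intermediate} profiles $v_1,\dots,v_n$ need to lie in $\tilde V$, while the endpoint $v_0=v$ may be degenerate --- is exactly what makes this work, and it is correct: in the last downward step you only extract the inequality $g(v,k(v))\ge g(v,y)$ from truthfulness, not the identity $k(v)=y$. The cost is length; the paper's continuity argument is a two-line shortcut, whereas yours reproves most of Theorem~\ref{CharacterizationComb}. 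On the other hand, your argument is fully explicit, while the paper leaves the continuity of $g(v,k(v))$ as an exercise.
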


\begin{proof}
Let $\cal{O}$ denote the range of $k$ on $\tilde{V}$, and define
another algorithm that is optimal in its range by $\forall v,
\tilde{k}(v) \in \arg\max_{o \in {\cal{O}}}$. According to
Proposition \ref{maxRangeTruthful}, a VCG mechanism based on
$\tilde{k}$ is truthful. Consider the case where the agents are
truthful. Recall that the utility of all agents is determined by
the resulting total welfare. Thus, it is not difficult to see that
the welfares $g(v, k(v))$ and $g(v, \tilde{k}(v))$ must be
continuous in $v$. Two continuous real functions, which are
identical on a dense subspace, are identical on the whole space
and thus the corollary follows.
\end{proof}

\smallspace We now show that non-optimal truthful \vcgbased\
mechanisms suffer from the following disturbing abnormal behavior:

\begin{definition}\label{LReasonableMechCA}
\label{resmech} {\bf (reasonable mechanism)} A mechanism for
combinatorial auctions is called {\em reasonable} if whenever
there exists an item $j$ and an agent $i$ such that:
\begin{enumerate}
\item For all $S$, if $j \notin S$, then $v^i(S \cup \{ j \}) >
v^i(S)$, and,

\item For every agent $l \neq i$, $\forall S, \, v^i(S \cup \{ j
\}) = v^i(S)$,
\end{enumerate}

then $j$ is allocated to agent $i$.
\end{definition}

\noindent Simply put, in situations where only one agent desires
an item, that agent gets it.

\begin{theorem}
\label{nonres} Any non-optimal truthful \vcgbased\ mechanism for
combinatorial auctions is not reasonable.
\end{theorem}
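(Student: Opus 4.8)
The plan is to prove the contrapositive: assume $m = (k,p)$ is a truthful \vcgbased\ mechanism that is reasonable, and show that $k$ must be optimal (on the full type space, modulo the usual measure-zero caveats). By Theorem \ref{CharacterizationComb}, truthfulness already tells us that $k$ is \maximizer\ at $\tilde{V}$; let $\cal{O}$ be its range there. So it suffices to show that $\cal{O}$ contains every allocation that is optimal for some $v \in \tilde{V}$ — equivalently, that restricting to $\cal{O}$ never loses welfare. I would argue that an optimal allocation $x$ for a generic type $v$ can be ``forced'' into the range by exploiting reasonableness.

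\smallskip

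The key construction: given $v \in \tilde{V}$ with optimal allocation $x$, build a modified type vector where, for each item $j$, only the agent $i$ who receives $j$ in $x$ has any use for $j$ — formally, bump agent $i$'s valuation so that adding $j$ to any set strictly increases its value, and zero out item $j$'s marginal contribution for every other agent (matching the two conditions in Definition \ref{resmech}). For items not allocated in $x$, make nobody want them, or assign them arbitrarily. Call the resulting type vector $z$, perturbed slightly to stay in $\tilde{V}$. Reasonableness then forces $k(z)$ to allocate each item $j$ to the designated agent $i$, i.e. $k(z) = x$ (at least on the allocated items; free disposal handles the rest). Hence $x \in \cal{O}$. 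Since this works for the optimal allocation of every generic $v$, the range $\cal{O}$ is rich enough that being \maximizer\ over $\cal{O}$ coincides with being globally optimal, so $k$ is optimal on $\tilde{V}$, and by the continuity/density argument of Corollary \ref{charcomp} this extends to all $v$ — contradicting non-optimality.

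\smallskip

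The main obstacle is making the ``reasonableness forces $k(z) = x$'' step airtight when several items are involved simultaneously: Definition \ref{resmech} is stated one item at a time, so I need to check that the single-item hypotheses hold for \emph{each} item $j$ in the vector $z$ at once, and that applying the conclusion item-by-item is consistent (it is, since the conclusions concern disjoint items and ``agent $i$ gets item $j$'' for a fixed partition cannot conflict). A secondary subtlety is the perturbation: adding small noise to keep $z \in \tilde{V}$ must not destroy the strict inequalities in condition (1) nor the equalities in condition (2); one handles this by noting (1) is an open condition (robust to small noise) while (2) can be kept exactly by perturbing only the ``base'' valuations and not the marginal of item $j$, or alternatively by invoking the same ``the agents are not indifferent'' robustness remark used after Theorem \ref{CharacterizationComb}. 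Once these bookkeeping points are settled, the contradiction with non-optimality is immediate.
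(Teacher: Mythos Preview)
Your core construction --- the ``disjoint demand'' type vector $z$ in which each item $j$ is desired only by the agent who receives it in $x$ --- is exactly what the paper uses, and the logic (reasonableness pins down the allocation on such inputs) is the same. The paper runs it directly rather than by contrapositive: it takes a partition $s$ missing from the range (supplied by Corollary~\ref{charcomp}), builds the disjoint-demand type for $s$, and observes that $k$ must then misallocate some item desired by a unique agent.

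There is, however, a real gap in your perturbation step. To keep condition~(2) of Definition~\ref{LReasonableMechCA} exact for every item $j \notin x^l$ simultaneously, you are forced to have $z^l(S)$ depend only on $S \cap x^l$. But then any two allocations $a \neq b$ with $a^l \cap x^l = b^l \cap x^l$ for all $l$ have identical welfare under $z$, and such pairs exist whenever $n \geq 3$ (take any item $j \in x^1$ and compare an allocation giving $j$ to agent~$2$ with one giving $j$ to agent~$3$, agreeing elsewhere). So no perturbation can place $z$ in $\tilde{V}$ while preserving condition~(2), and your route to ``$x \in {\cal{O}}$'' breaks. The fix is to invoke Corollary~\ref{charcomp} earlier, not just at the end: reasonableness gives $k(z) = x$, hence $g(z, k(z)) = g(z, x)$; by Corollary~\ref{charcomp} the welfare-equivalent $\tilde{k}$ (which is \maximizer\ with range ${\cal{O}}$) satisfies $g(z, \tilde{k}(z)) = g(z, x)$; since $x$ is the \emph{strict} welfare optimum at $z$, this forces $\tilde{k}(z) = x$, so $x \in {\cal{O}}$ with no need for $z \in \tilde{V}$. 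With that patch your contrapositive goes through and is logically equivalent to the paper's direct argument.
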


\begin{proof}
Consider such a mechanism $m$. According to Corollary
\ref{charcomp} there exists an equivalent mechanism $\tilde{m} =
(\tilde{k}, p)$, which is optimal in its range. Since $\tilde{m}$
must also be sub-optimal, there exists at least one partition $s =
(s^1, \ldots, s^n)$ that is not in the range of the mechanism.
Define a vector of types by:
\[ v^i(x) = \left\{ \begin{array}{ll}
  1  & \mbox{if $x \supseteq s^i$} \\
  0  & \mbox{otherwise.}
\end{array}
\right. \]
In other words, each agent $i$ wants a single set $s^i$, and no
two agents want the same item (as the sets are disjoined). Since
$s$ is not in the range of $\tilde{m}$ it must be that
$\tilde{k}(v) \neq s$. Since $s$ is strictly optimal, $k(v)$ must
also be suboptimal. Hence, there exists at least one agent $i$
that does not get $s^i$. In particular, there exists at least one
item $j \in s^i$ that agent $i$ does not get. Since $i$ is the
only agent that desires $j$, the theorem follows.
\end{proof}

\begin{corollary}
Unless $P = NP$,  any polynomial time truthful \vcgbased\
mechanism for combinatorial auctions is not reasonable.
\end{corollary}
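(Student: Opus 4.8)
The statement is an immediate consequence of Theorem~\ref{nonres} together with the known NP-completeness of optimal winner determination in combinatorial auctions, so the plan is simply to assemble these two facts via the contrapositive. I would argue as follows: suppose, for contradiction, that there exists a polynomial time \vcgbased\ mechanism $m=(k,p)$ for combinatorial auctions that is \emph{both} truthful and reasonable. By Theorem~\ref{nonres}, a non-optimal truthful \vcgbased\ mechanism cannot be reasonable; hence $m$ cannot be non-optimal, i.e., its output algorithm $k$ must in fact be optimal --- for every type vector $v$ it outputs an allocation achieving $g_{opt}(v)=\max_{o\in O}g(v,o)$ (equivalently, by Corollary~\ref{charcomp} the equivalent maximal-in-range algorithm $\tilde k$ has range equal to all of $O$, so $g(v,k(v))=g_{opt}(v)$ for all $v$).

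Next I would invoke the computational half: as noted in Section~\ref{example}, finding an allocation that maximizes the total welfare in a combinatorial auction is NP-complete. Since $m$ is polynomial time computable, $k$ runs in polynomial time; running $k$ on the declared valuations and evaluating the resulting welfare therefore solves this NP-complete problem in polynomial time, which forces $P=NP$. Taking the contrapositive of ``($m$ polynomial, truthful, reasonable) $\Rightarrow P=NP$'' yields exactly the Corollary: unless $P=NP$, no polynomial time truthful \vcgbased\ mechanism for combinatorial auctions is reasonable.

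There is essentially no genuine obstacle here --- the content is entirely in Theorem~\ref{nonres}, which is already proved. The only point requiring a word of care is the meaning of ``non-optimal'': I would make explicit that by the characterization (Theorem~\ref{CharacterizationComb} and Corollary~\ref{charcomp}) a truthful \vcgbased\ mechanism is maximal in its range, so ``not non-optimal'' means the range is all of $O$ and hence $k$ attains the globally optimal welfare on every input, which is precisely what is needed to reduce winner determination to running $k$. One should also note that $k$ need not output \emph{the} optimal allocation, only an allocation of optimal welfare value, which already suffices for the NP-hardness reduction.
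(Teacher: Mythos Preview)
Your proposal is correct and matches the paper's intended reasoning: the corollary is stated immediately after Theorem~\ref{nonres} with no separate proof, precisely because it follows by the contrapositive you describe---truthful and reasonable forces optimality by Theorem~\ref{nonres}, and a polynomial time optimal allocation algorithm for combinatorial auctions would yield $P=NP$. Your care in noting that $k$ need only output an allocation of optimal welfare (not a canonical optimal allocation) is apt and sufficient for the reduction.
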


\smallspace We believe that most of the ``natural'' allocation
algorithms (e.g., linear programming relaxations, algorithms that
greedily allocate items to agents,  local search algorithms) will
not yield the above anomaly. In particular, we presume that when
each agent wants a single subset of items and these subsets are
disjoined, any such algorithm will find the optimal allocation.
Thus, the above corollary suggests that it might be difficult to
develop allocation algorithms that yield truthful \vcgbased\
mechanisms.

We now show how to generalize our results to any affine-based
mechanism. Given a tuple $a = (a_0, \dots, a_{n})$, we define
$\tilde{V}$ to be the space of all types $v$ such that for any two
different allocations $x$ and $y$, $g_a(v,x) \neq g_a(v,y)$.
Similarly to the unweighted case, we say that an algorithm is
optimal in its range with respect to $g_a(.)$ if it always
produces allocations that maximize $g_a(.)$.

\begin{theorem}
\label{CharacterizationAffine} Consider an affine-based mechanism
for the combinatorial auction problem defined by an allocation
algorithm $k(.)$, and a tuple $a = (a_0, \dots, a_{n})$. If the
mechanism is truthful, then $k(.)$ is maximal in its range with
respect to $g_a(.)$ at $\tilde{V}$.
\end{theorem}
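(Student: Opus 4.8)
The plan is to mimic the proof of Theorem~\ref{CharacterizationComb} essentially verbatim, replacing the welfare $g(\cdot)$ by the weighted welfare $g_a(\cdot)$ throughout, and checking that each step of the hybrid argument survives the introduction of the positive weights $a_1,\dots,a_n$ and the mechanism's own preference term $a_0(\cdot)$. The starting point is Lemma~\ref{LAffineUtil}: the utility of agent $i$ in the affine-based mechanism, when its actual type is $v^i$ and the declared vector is $w$, equals $\frac{1}{a_i}(g_a((v^i,w^{-i}),k(w)) + h^i(w^{-i}))$. Since $a_i > 0$ and $h^i$ depends only on $w^{-i}$, agent $i$'s preferences over the outcomes reachable by varying $w^i$ are exactly the preferences induced by $g_a((v^i,w^{-i}),\cdot)$; so truthfulness of the mechanism is equivalent to saying that for every $i$, every $v^i$, and every $w^{-i}$, declaring $v^i$ yields an outcome maximizing $g_a((v^i,w^{-i}),\cdot)$ over the range $\{k(v^i,w^{-i}),\, k(w^i,w^{-i}) : w^i\}$ --- the same "local optimality" fact that drove the unweighted proof.

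With that in hand I would assume, for contradiction, that $k$ is not maximal in its range with respect to $g_a$ at $\tilde V$ (where $\tilde V$ is now defined via $g_a$), pick $v\in\tilde V$ with $k(v)$ not $g_a$-optimal over $\mathcal O = \{k(w):w\in\tilde V\}$, let $y$ be the unique $g_a$-maximizer in $\mathcal O$, and fix $w\in\tilde V$ with $k(w)=y$. The "forcing" type $z$ is built exactly as before: $z^i(s) = v^i(s)$ if $s\not\supseteq y^i$ and $z^i(s) = \alpha$ (huge) if $s\supseteq y^i$. The two hybrid sequences $w_0,\dots,w_n$ (from $w$ to $z$) and $v_0,\dots,v_n$ (from $v$ to $z$) are unchanged. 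The key local step --- that $k(w_1) = y$ --- goes through: if not, local optimality for agent~$1$ with type $z^1$ gives $g_a(w_1,k(w_1)) > g_a(w_1,y)$; since $\alpha$ dominates, $k^1(w_1)\supseteq y^1$, so agent~$1$'s contribution $a_1 z^1(\cdot) = a_1\alpha$ is the same under $k(w_1)$ and under $y$; subtracting it yields $\sum_{j\neq 1}a_j w^j(k(w_1)) > \sum_{j\neq 1}a_j w^j(y)$; and free disposal plus $a_1 > 0$ gives $a_1 w^1(k(w_1)) \geq a_1 w^1(y)$, while the mechanism's term $a_0(\cdot)$ is identical in the two welfare expressions on both sides --- so adding back gives $g_a(w_0,k(w_1)) > g_a(w_0,y)$, contradicting local optimality for agent~$1$ with type $w^1$. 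Induction on $j$ then gives $k(z) = y$, and the downward induction along $v_0,\dots,v_n$ (using that $y$ uniquely maximizes $g_a(v_j,\cdot)$ on $\mathcal O$ for each $j$, which follows because replacing a $v^j$ by $z^j$ only adds the constant $a_j\alpha$ to every relevant allocation's weighted welfare) yields $k(v) = y$, the contradiction.

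The one genuinely new thing to verify --- and the place I'd be most careful --- is that the presence of the fixed term $a_0(\cdot)$ does not break the subtraction/addition manipulations: in every inequality above, the allocations being compared are $k(w_1)$ (resp.\ $x$) and $y$, and $a_0(k(w_1))$ versus $a_0(y)$ appears additively and \emph{identically} on both sides of each chain of inequalities, so it simply cancels and never needs to be sign-controlled. Likewise one should note $a_i > 0$ is exactly what makes free disposal ($w^1(k(w_1))\geq w^1(y)$) usable after scaling. I'd also restate the caveat, as in the unweighted remark, that $z$ (and each $w_j$, $v_j$) may be perturbed by arbitrarily small noise to land inside $\tilde V$ without disturbing any strict inequality. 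Modulo these checks the proof is a line-by-line transcription, so I would present it briefly: set up the utility characterization, state that the hybrid argument of Theorem~\ref{CharacterizationComb} applies mutatis mutandis with $g_a$ in place of $g$, and spell out only the one local step where the weights and $a_0$ enter.
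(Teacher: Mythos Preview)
Your proposal is correct and follows essentially the same approach as the paper's own proof, which is itself only a sketch (``proceed along the lines of the proof of Theorem~\ref{CharacterizationComb}'' with $g_a$ in place of $g$, using Lemma~\ref{LAffineUtil} and the same two hybrid sequences $w\to z$ and $z\to v$). You in fact supply more detail than the paper does, correctly isolating the two places where the affine structure matters: that $a_i>0$ preserves the free-disposal inequality after scaling, and that $a_0(\cdot)$ depends only on the allocation and hence is carried unchanged when passing from $g_a(w_1,\cdot)$ to $g_a(w_0,\cdot)$.
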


\begin{proof} {\bf (sketch)} The proof is similar to the proof of
Theorem \ref{CharacterizationComb} and we thus only sketch it.
Define $\tilde{V}$ and ${\cal{O}}$ similarly but w.r.t. the affine
transformation $g_a(.)$. Assume by contradiction that there exists
a type vector $v$ such that $k(v)$ is not optimal in ${\cal{O}}$.
Let $y$ be the optimal allocation in the range ${\cal{O}}$, and $w
\in \tilde{V} $ such that $k(w) = y$. According to Lemma
\ref{LAffineUtil}, the utility of each agent is maximized with the
weighted welfare $g_a((v^i, w^{-i}),.)$. Thus, it is possible to
proceed along the lines of the proof of Theorem
\ref{CharacterizationComb}: Define a type vector $z$ similarly;
then, start from $w$ and gradually transform all agents to $z$ and
conclude that $k(z) = y$; then gradually transform all agents from
$z$ to $v$ and show that $k(v) = y$, i.e., -- a contradiction.
\end{proof}

\smallspace {\bf Open Questions} We currently do not know whether
theorems similar to Theorem \ref{CharacterizationComb} hold when
the valuations are bounded. Moreover, we do not know how to get
rid of the usage of $\tilde{V}$. Thus, we do not preclude the
possibility that Corollary \ref{charcomp} will not hold when the
space of possible types is discrete. We also do not know whether
they hold when the allocation algorithm is randomized or whether
Bayesian versions of our theorems apply to the expected
externality mechanism \cite{GerardVaret} (an analog of VCG in the
Bayesian model). We leave this to future research. We conjecture
that similar theorems apply to many other mechanism design
problems.

\subsection{Truthful \vcgbased\ Mechanisms for Cost Minimization
Problems} \label{campsec}

We now show that for many natural cost minimization problems, any
truthful \vcgbased\ mechanism is either optimal or produces
results that are arbitrarily far from the optimal. We start with a
sample problem.

\smallspace {\bf Multicast transmissions:} A communication network
is modeled by a directed graph $G = (V, E)$. Each edge $e$ is a
privately owned link. The cost $t_e$ of sending a message along
that edge is {\em privately} known to its owner. Given a source $s
\in V$ and a set $T \subseteq V$ of terminals, the mechanism must
select a subtree rooted in $s$ that covers all the terminals. The
message is then broadcasted along this tree. We assume that no
agent owns a cut in the network.

Naturally, the goal of the mechanism is to select, among all
possible trees, a tree $R$ that minimizes the total cost: $\sum_{e
\in R} t_e$. The goal of each agent is to maximize its {\em own}
profit: $p^i - \sum_{(e \in R {\mbox{ owned by\ }} i)} t_e$. It is
not difficult to see that this is a utilitarian mechanism design
problem.

\smallspace This example was introduced by \shortciteR{fps00}
(using a different model). It is motivated by the need to
broadcast long messages (e.g., movies) over the Internet. We now
generalize this example.

\begin{definition}
\label{CAMP} {\bf (cost minimization allocation problem)}

A {\em cost minimization allocation problem (CMAP)} is a mechanism
design problem described by:

\begin{description}
\item[Type space] The type of each agent $i$ is described by a
vector $(v^i_1, \ldots, v^i_{m_i})$.  We let $m = \sum_i m_i$. (In
our multicast example $v^i_e$ corresponds to the negation of the
cost $t_e$.)

\item[Allowable outputs] Each output is denoted by a bit vector $x
= (x^1_1, \ldots, x^1_{m_1}, \ldots, x^n_1, \ldots, x^n_{m_n}) \in
\{0,1\}^m$. We denote $(x^i_1, \ldots, x^i_{m_i})$ by $x^i$. There
may be additional constraints on the set $O$ of allowable outputs.
(In our example $x$ corresponds to a tree in the network's graph
where $x^i_j$ equals $1$ iff the corresponding edge is in the
chosen tree.)
\end{description}
such that the following conditions are satisfied:

\begin{description}
\item[Unbounded costs]
If $v^i = (v^i_1, \ldots, v^i_{m^i})$ describes a type for agent
$i$ and $w^i \leq v^i$ (as vectors), then $w^i$ also describes a
type.

\item[Independence and monotonicity] Each valuation $v^i$ depends
only on $i$'s bits $x^i$. (In our example, the agent valuation of
a given tree depends only on its own edges in it.) If for all $j$,
$w^i_j \leq v^i_j$, then for every output $x$, $w^i(x^i) \leq
v^i(x^i)$.

\item[Forcing condition]
For every type $v$,  an allowable output $x$ and a real number
$\alpha$,

define a type $v [ \alpha ] $ by

\[ v[\alpha]^i_j = \left\{ \begin{array}{ll}
  v^i_j  & \mbox{if $x^i_j = 1$} \\
  \alpha        & \mbox{otherwise.}
\end{array}
\right. \]

The forcing condition is satisfied if for every allowable output
$y \neq x$, $\lim_{\alpha \rightarrow -\infty} g(t(\alpha), y) =
-\infty$.

\end{description}
\end{definition}

Many natural decentralized task allocation problems in which the
goal is to minimize the total cost under given constraints belong
to this class. In particular the reader may verify that our
multicast example falls into this category. Another example is the
shortest path problem studied extensively in recent years (e.g.,
\shortciteR{JG94,AT02,ess03}).

\notations\ For a type $v$ we let $g_{opt}(v)$ denote the optimal
value of $g$. We denote $g(v, k(v))$ by $g_k(v)$.

\begin{definition}
\label{degenerate} {\bf (degenerate algorithm)} An output
algorithm $k$ is called {\em degenerate} if the ratio  $r_k(v) =
\frac {g_k(v) - g_{opt}(v)} {|g_{opt}(v)| + 1}$ is unbounded,
i.e., there exist $v$'s such that $r_k(v)$ is arbitrarily large.
\end{definition}

A degenerate algorithm is arbitrarily far from optimal, both
additively and multiplicatively. Note that this should not be
confused with the standard notion of an approximation ratio, as
our definition corresponds to a single problem. In particular, the
number of agents is fixed. We note that we do not rule out the
possibility that such an algorithm will be good by some non worst
case metric.

\begin{theorem}
\label{degopt} If a \vcgbased\ mechanism for a CMAP is truthful,
then its output algorithm is either optimal or degenerate.
\end{theorem}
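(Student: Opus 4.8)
The plan is to mimic the structure of the proof of Theorem \ref{CharacterizationComb}, but in the cost-minimization setting. Suppose $m = (k,p)$ is a truthful \vcgbased\ mechanism for a CMAP that is not degenerate; I will show it is optimal. As in the combinatorial-auction argument, I may take all $h^i(\cdot) = 0$, so that by Lemma \ref{LLemmaVCGutil} agent $i$'s utility under declaration $w$ and true type $v^i$ is $g((v^i, w^{-i}), k(w))$. The goal is to show $g_k(v) = g_{opt}(v)$ for all relevant $v$; by a density/continuity argument (or by working directly on a ``generic'' subspace $\tilde V$ where the welfares of distinct outputs are all distinct, and then extending) it suffices to prove this for generic $v$.

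The core is a ``forcing'' lemma analogous to Lemma \ref{charlemma}: if $x$ is the (unique, generic) optimal output for $v$ and $w$ is any declaration with $k(w) = x$ (such a $w$ exists provided $x$ is in the range of $k$ — this is where non-degeneracy enters, see below), then I claim $k(v) = x$. To prove it, first use the \textbf{forcing condition}: for the modified type $v[\alpha]$ built from $v$ and the output $x$, taking $\alpha \to -\infty$ makes $g(v[\alpha], y) \to -\infty$ for every $y \neq x$, while $g(v[\alpha], x) = g(v,x)$ stays fixed; hence for $\alpha$ sufficiently negative, $x$ uniquely maximizes $g(v[\alpha], \cdot)$. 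Then run the hybrid/telescoping argument: define $w_0 = w, w_1 = (v[\alpha]^1, w^{-1}), \ldots, w_n = v[\alpha]$, moving one agent at a time from $w$ to $v[\alpha]$. At each step, if $k(w_{j}) \neq x$, then since the agent whose coordinate just changed could have forced output $x$ by reverting to its old declaration, truthfulness forces $g(w_j, k(w_j)) > g(w_j, x)$; combined with the independence/monotonicity structure of the CMAP (the $\alpha$-entries dominate all other terms) this yields $g(w_{j-1}, k(w_j)) > g(w_{j-1}, x)$, contradicting the inductive hypothesis $k(w_{j-1}) = x$. So $k(v[\alpha]) = x$. Then telescope back the other way, from $v[\alpha]$ to $v$ one agent at a time, using that $x$ remains the unique optimum along the path (an argument like the second group of claims in Theorem \ref{CharacterizationComb}), to conclude $k(v) = x$.

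It remains to connect this to non-degeneracy. If $k$ is \emph{not} optimal, there is some $v^{*}$ whose optimal output $x^{*}$ is not $k(v^{*})$. If $x^{*}$ happened to lie in the range of $k$ on the generic subspace, the forcing lemma would immediately give $k(v^{*}) = x^{*}$, a contradiction; so the optimal output for $v^{*}$ is \emph{outside} the range of $k$. Now I build a family of instances $v[\alpha]$ (same construction, using $v^{*}$ and $x^{*}$) parametrized by $\alpha \to -\infty$: the optimum $g_{opt}(v[\alpha]) = g(v^{*}, x^{*})$ is bounded, but since $x^{*}$ is not attainable, $k(v[\alpha])$ must be some $y \neq x^{*}$, and the forcing condition forces $g(v[\alpha], y) = g_k(v[\alpha]) \to -\infty$. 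Hence $r_k(v[\alpha]) = \frac{g_k(v[\alpha]) - g_{opt}(v[\alpha])}{|g_{opt}(v[\alpha])| + 1} \to +\infty$, so $k$ is degenerate — contradicting our assumption. Therefore a non-degenerate truthful \vcgbased\ mechanism must be optimal.

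The main obstacle I anticipate is handling the genericity/$\tilde V$ bookkeeping cleanly: the forcing and hybrid arguments need ties broken (so that ``the'' optimal output is unique and the ``better off lying'' deductions go through), which requires perturbing $v[\alpha]$ by small noise while preserving both the $\alpha \to -\infty$ limits in the forcing condition and the chain of strict inequalities along the hybrid path; and then one must argue (as in Corollary \ref{charcomp}, via continuity of the welfare in $v$) that optimality on the generic subspace extends to all types. A secondary subtlety is verifying that the independence/monotonicity axioms of the CMAP really do let the $\alpha$-coordinates ``dominate'' so that the step from $g(w_j,\cdot)$ inequalities to $g(w_{j-1},\cdot)$ inequalities survives perturbation — this is the analogue of the free-disposal step ($w^1(k(w_1)) \ge w^1(y)$) in the combinatorial-auction proof, and it is where the precise form of the CMAP definition is used.
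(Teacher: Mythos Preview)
Your plan transplants the combinatorial-auction argument wholesale, but the step you flag as a ``secondary subtlety'' is in fact the main obstacle, and it does not go through as sketched. In your first hybrid (from an arbitrary $w$ with $k(w)=x$ toward $v[\alpha]$), the passage ``$g(w_j,k(w_j))>g(w_j,x)$ yields $g(w_{j-1},k(w_j))>g(w_{j-1},x)$'' requires comparing the types $w^j$ and $v[\alpha]^j$; but these are unrelated --- neither dominates the other coordinatewise --- and the CMAP axioms give only monotonicity in the \emph{type vector}, not in the \emph{output}. There is no free-disposal analogue here: if $k(w_j)^j$ happens to set no bit outside $x^j$, the $\alpha$-coordinates of agent $j$ contribute nothing, and the inequality you need depends on the arbitrary $w^j$ in an uncontrolled way. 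So your forcing lemma, and hence Case~1, has a real gap.

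The paper's proof sidesteps all of this with a single, much shorter hybrid. Take $v$ with $k(v)$ strictly suboptimal and $y=opt(v)$; set $z^i_j=v^i_j$ where $y^i_j=1$ and $z^i_j=-\alpha$ otherwise; walk $v=v_0,v_1,\ldots,v_n=z$ one agent at a time. The invariant is \emph{not} ``$k(v_j)=y$'' but merely ``$g(v_j,k(v_j))<g(v_j,y)$'': strict suboptimality propagates. The inductive step uses only that $z^i\le v^i$ coordinatewise (so type-monotonicity gives $g(v_{j-1},\cdot)\ge g(v_j,\cdot)$), together with $g(v_j,y)=g(v_0,y)$ and truthfulness; no extraneous preimage $w$ ever enters, so the problematic comparison never arises. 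At $j=n$ the invariant says $k(z)\neq y$, and the forcing condition then drives $g(z,k(z))\to-\infty$ as $\alpha\to\infty$ --- this is exactly your Case~2 endgame, reached directly without any forcing lemma or case split. And because the initial inequality is already strict, no $\tilde V$/genericity bookkeeping is needed either; the obstacle you anticipated simply does not appear.
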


Before stating the proof let us illustrate it using the multicast
transmission example. Suppose that we start with a type vector
that leads to a sub-optimal solution. If we raise the cost of an
edge, the utility of the owner cannot increase (due to the
truthfulness and Lemma \ref{LLemmaVCGutil}). We then gradually
raise the cost of all edges except the ones in the optimal tree.
Still, the algorithm will have to choose a sub-optimal tree.
However, the cost of {\em any} suboptimal tree is now arbitrarily
high while the optimal cost remains the same. \smallspace

\begin{proof}
Let $m = (k, p)$ be a  non-optimal truthful \vcgbased\ mechanism
for a CMAP. As in Theorem \ref{CharacterizationComb}, assume that
$p^i(w) =  \sum_{j \neq i} w^j(k(w))$. Let $v$ be a type vector
such that $k(v)$ is not optimal and let $y = opt(v)$ be an optimal
output.

We define a type $z$ by:

\[ z^i_j = \left\{ \begin{array}{ll}
  v^i_j  & \mbox{if $y^i_j = 1$} \\
  -\alpha       & \mbox{otherwise.}
\end{array}
\right. \]

\noindent where $\alpha$ is arbitrarily large.

Consider the type sequence:

\[ \begin{array}{ll}
  v_0 =   & (v^1, \ldots, v^n)  \\
  v_1 =      & (z^1, v^2, \ldots, v^n)  \\
\vdots \\
  v_n =  & (z^1, \ldots, z^n).  \\
\end{array}
\]

\begin{claim}
For all $j$, $y = opt(v_j)$.
\end{claim}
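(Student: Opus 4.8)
The plan is to prove the claim as a pure statement about the welfare function $g$, using only the structural axioms of a CMAP (Unbounded costs, Independence and monotonicity) and not the algorithm $k$ or truthfulness at all. Fix $j$ and recall $v_j = (z^1,\ldots,z^j,v^{j+1},\ldots,v^n)$, where for each agent $i$ the type $z^i$ agrees with $v^i$ on the coordinates in the support of $y^i$ (those $\ell$ with $y^i_\ell = 1$) and equals $-\alpha$ on the remaining coordinates. First I would choose $\alpha$ large enough that $-\alpha \le v^i_\ell$ for every agent $i$ and every coordinate $\ell$; since there are finitely many coordinates this is possible, and then $z^i \le v^i$ componentwise for every $i$, so by the Unbounded costs axiom each $z^i$ is a legitimate type.

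Next I would establish two elementary inequalities. First, for every allowable output $x$, monotonicity applied to each of the agents $1,\ldots,j$ gives $z^i(x^i) \le v^i(x^i)$, hence $g(v_j,x) = \sum_{i\le j} z^i(x^i) + \sum_{i>j} v^i(x^i) \le \sum_i v^i(x^i) = g(v,x)$. Second, for the particular output $y$, the Independence axiom says that $v^i(y^i)$ depends only on the coordinates of agent $i$ lying in the support of $y^i$, and on exactly those coordinates $z^i$ coincides with $v^i$; therefore $z^i(y^i) = v^i(y^i)$ for every $i$, and consequently $g(v_j,y) = g(v,y) = g_{opt}(v)$.

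Combining these, for every allowable output $x$ we obtain $g(v_j,x) \le g(v,x) \le g(v,y) = g(v_j,y)$, where the middle inequality is optimality of $y$ for $v_0 = v$. Hence $y$ maximizes $g(v_j,\cdot)$ over all allowable outputs, i.e. $y = opt(v_j)$; since $j$ was arbitrary this proves the claim. If one needs $opt(v_j)$ to be uniquely determined, note that when $y$ is the strict optimum for $v$ the chain of inequalities is strict for $x\ne y$, and the general case is handled by an arbitrarily small perturbation of $v$, exactly as in the combinatorial-auctions argument.

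I expect the only delicate point to be the equality $z^i(y^i) = v^i(y^i)$: monotonicity alone yields just $z^i(y^i) \le v^i(y^i)$, and the reverse direction genuinely needs the independence assumption (the value of an output depends only on the type components in its support, as in the multicast example); everything else is bookkeeping. Once the claim is in hand it feeds the downward hybrid induction — starting from $k(z)=y$ and using Lemma \ref{LLemmaVCGutil} together with truthfulness at each step $v_{j+1}\to v_j$ to force $k(v_j)$ to be welfare-optimal, ending in a contradiction at $v_0=v$ — which shows $k(z)\ne y$ and then, via the Forcing condition, that $g(z,k(z))\to-\infty$ while $g_{opt}(z)=g_{opt}(v)$ stays fixed, establishing degeneracy.
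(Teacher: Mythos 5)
Your proof is correct and is essentially the paper's own argument: independence gives $g(v_j,y)=g(v_0,y)$ (since $z^i$ agrees with $v^i$ on the coordinates selected by $y^i$), monotonicity gives $g(v_j,x)\le g(v_0,x)$ for every output $x$, and chaining these with the optimality of $y$ at $v_0$ yields the claim. Your extra remarks (choosing $\alpha$ large enough for the Unbounded-costs axiom, and flagging that $z^i(y^i)=v^i(y^i)$ genuinely needs independence rather than monotonicity) are sound elaborations of steps the paper leaves implicit.
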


\begin{proof}
By definition $y$ is optimal for $v_0$. Let $x \neq y$ be an
allocation. From the independence condition, for all $j$, $g(v_j,
y) = g(v_0, y)$. From the monotonicity, $g(v_j, x) \leq g(v_0,
x)$. Together, $g(v_j, x) \leq g(v_0, x) \leq g(v_0, y) = g(v_j,
y)$.
\end{proof}

\begin{claim}
$g(v_1, k(v_1)) < g(v_1, y)$
\end{claim}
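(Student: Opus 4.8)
The plan is to derive the inequality from the truthfulness of the mechanism applied to agent $1$, together with the fact (just established in the previous claim) that $y$ is the unique optimum for $v_1$. First I would note that, since we are free to perturb $z$ slightly, we may assume $v_1 \in \tilde V$, so that $y$ is the \emph{strict} optimum of $g(v_1,\cdot)$ over all allowable outputs, and in particular $g(v_1,k(v_1)) \neq g(v_1,y)$ unless $k(v_1)=y$. So it suffices to rule out $k(v_1)=y$ and then the inequality $g(v_1,k(v_1)) < g(v_1,y)$ follows (the reverse inequality is impossible because $y$ is optimal).

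Next I would argue by contradiction that $k(v_1) = y$ is impossible. Consider the situation where agent $1$'s true type is $v^1$ and the other agents declare $v^2,\ldots,v^n$, i.e.\ the true profile is $v_0 = v$. By Lemma~\ref{LLemmaVCGutil}, agent $1$'s utility when it is truthful equals $g(v_0, k(v_0)) = g(v_0, k(v))$, which by assumption on $v$ is not optimal, hence is strictly less than $g(v_0, y)$ — here I would invoke that $v \in \tilde V$ so that $y$, being an optimum, strictly beats the sub-optimal $k(v)$. Now suppose agent $1$ instead declares $z^1$: the profile becomes $v_1$, and if $k(v_1) = y$ then agent $1$'s utility becomes $g(v_0, y)$ (again by Lemma~\ref{LLemmaVCGutil}, since the relevant quantity is $g$ evaluated with agent $1$'s \emph{true} type $v^1$ and the others' declarations, but the others still declare their true $v^{-1}$, so it is $g(v_0, k(v_1)) = g(v_0, y)$). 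Since $g(v_0,y) > g(v_0, k(v))$, this deviation strictly improves agent $1$'s utility, contradicting truthfulness. Therefore $k(v_1) \neq y$, and combined with the strictness from $v_1 \in \tilde V$ and the optimality of $y$ for $v_1$, we get $g(v_1, k(v_1)) < g(v_1, y)$.

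The one subtlety I would be careful about is making sure all the relevant profiles — $v_0$ and $v_1$ — lie in $\tilde V$ simultaneously, so that both ``$y$ strictly beats $k(v)$ at $v_0$'' and ``$y$ strictly beats $k(v_1)$ at $v_1$'' hold; this is handled exactly as in the proof of Theorem~\ref{CharacterizationComb}, by adding sufficiently small noise to $z$ (equivalently, choosing $\alpha$ and the perturbation generically). The main obstacle, such as it is, is bookkeeping: keeping straight which type vector is plugged into $g$ and which into $k$ in each application of Lemma~\ref{LLemmaVCGutil} — the utility of agent $i$ is always $g$ evaluated at (agent $i$'s true type, others' declarations) and at the output $k(\text{declared profile})$. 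Once that is kept straight, the argument is a one-step deviation check and there is no real difficulty.

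\begin{proof}
We may assume $v_0 = v$ and $v_1$ both lie in $\tilde V$ (otherwise perturb $z$ by sufficiently small noise, as in the proof of Theorem~\ref{CharacterizationComb}). By the previous claim, $y = opt(v_1)$, and since $v_1 \in \tilde V$, $y$ is the \emph{unique} maximizer of $g(v_1, \cdot)$; hence it suffices to show $k(v_1) \neq y$, as this immediately yields $g(v_1, k(v_1)) < g(v_1, y)$.

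Suppose for contradiction that $k(v_1) = y$. Consider the case where agent $1$'s true type is $v^1$ and the others declare $(v^2, \ldots, v^n)$, so the declared profile is $v_0 = v$. By Lemma~\ref{LLemmaVCGutil}, agent $1$'s utility when truthful is $g(v_0, k(v_0)) = g(v, k(v))$, which is strictly less than $g(v, y)$ since $k(v)$ is sub-optimal, $y$ is optimal, and $v \in \tilde V$. If instead agent $1$ declares $z^1$, the declared profile becomes $v_1$, and by Lemma~\ref{LLemmaVCGutil} its utility is $g(v_0, k(v_1)) = g(v, y) > g(v, k(v))$. This strictly improves agent $1$'s utility, contradicting truthfulness. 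Hence $k(v_1) \neq y$, and therefore $g(v_1, k(v_1)) < g(v_1, y)$.
\end{proof}
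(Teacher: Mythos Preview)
Your proof imports the $\tilde V$/perturbation machinery from Theorem~\ref{CharacterizationComb} (combinatorial auctions), but the claim you are proving sits inside the proof of Theorem~\ref{degopt} (cost minimization allocation problems), where $\tilde V$ is never introduced and no genericity assumption is in play. This creates a real gap. First, perturbing $z$ cannot affect $v_0 = v$ at all, since $v_0$ does not involve $z$; so ``we may assume $v_0 \in \tilde V$'' is not something a perturbation of $z$ can buy you (fortunately you do not actually need it: sub-optimality of $k(v)$ already gives the strict inequality $g(v,k(v)) < g(v,y)$). More seriously, for $v_1 = (z^1, v^{-1})$ only the coordinates of $z^1$ are available to perturb, and such a perturbation can break ties only between outputs that differ in agent~$1$'s bits. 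If some output $y' \neq y$ has ${y'}^1 = y^1$ and the same welfare as $y$ at $v_1$, no perturbation of $z^1$ separates them, so you cannot force $y$ to be the \emph{unique} optimum for $v_1$ and hence cannot reduce to the case $k(v_1) = y$.

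The paper handles exactly the case you are trying to exclude---$k(v_1) \neq y$ with $g(v_1,k(v_1)) = g(v_1,y)$---by using the CMAP structure rather than genericity. Independence gives $g(v_1,y) = g(v_0,y)$ (agent~$1$'s type agrees with $v^1$ on the $1$-bits of $y^1$), and monotonicity gives $g(v_0,k(v_1)) \ge g(v_1,k(v_1))$ (since $z^1 \le v^1$ coordinatewise). Chaining these with the assumed equality yields
\[
g(v_0,k(v_1)) \;\ge\; g(v_1,k(v_1)) \;=\; g(v_1,y) \;=\; g(v_0,y) \;>\; g(v_0,k(v_0)),
\]
and now the same one-step deviation you wrote (agent~$1$ declaring $z^1$ instead of $v^1$) contradicts truthfulness---without ever needing $k(v_1) = y$. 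Your deviation step is fine; what is missing is this monotonicity/independence bridge in place of the unjustified appeal to $\tilde V$.
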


\begin{proof}
Assume by contradiction that the claim is false. Since $y$ is
optimal for $v_1$, this means that $g(v_1, k(v_1)) = g(v_1, y)$.
From independence, $g(v_1, y) = g(v_0, y)$. Recall that $k(v_0)$
is suboptimal so $g(v_0, y) > g(v_0, k(v_0))$. From monotonicity
(we only worsen the type of agent $1)$, $g(v_0, k(v_1)) \geq
g(v_1, k(v_1))$. Thus, together $g(v_0, k(v_1))  \geq g(v_1,
k(v_1)) = g(v_1, y) = g(v_0, y)
> g(v_0, k(v_0))$. In particular, $g(v_0, k(v_1)) > g(v_0, k(v_0))$.

Consider the case where agent $1$'s type is $v^1$ and the
declarations of the other agents are $(v^2, \ldots, v^n)$.
According to Lemma \ref{LLemmaVCGutil}, its utility when it is
truthful, equals $g(v_0, k(v_0))$. On the other hand, when it
falsely declares $z^1$, its utility equals $g(v_0, k(v_1))$. Since
we showed that $g(v_1, k(v_1)) > g(v_0, k(v_0))$, this contradicts
the truthfulness of the mechanism.
\end{proof}

\smallspace Similarly, we obtain that $g(v_n, k(v_n)) < g(v_n, y)
= g(v_0, y)$. By the forcing condition, $g(v_n, k(v_n))
\rightarrow -\infty$ when $\alpha \rightarrow \infty$. Thus, the
algorithm is degenerate.
\end{proof}

\begin{corollary}
Unless $P = NP$, any polynomial time truthful \vcgbased\ mechanism
for an NP-hard CAMP is degenerate.
\end{corollary}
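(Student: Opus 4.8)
The plan is to read this off directly from Theorem~\ref{degopt} together with the definition of an NP-hard CMAP. Suppose, toward a contradiction, that $P \neq NP$ and that $m = (k, p)$ is a polynomial time truthful \vcgbased\ mechanism for a CMAP whose total-welfare maximization problem is NP-hard. Since $m$ is a truthful \vcgbased\ mechanism for a CMAP, Theorem~\ref{degopt} applies and tells us that the output algorithm $k$ is either optimal or degenerate. The whole argument then reduces to ruling out the first alternative.

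So the first step is to argue that $k$ cannot be optimal. Here I would use the fact (noted right after the definition of polynomial mechanism) that a \vcgbased\ mechanism is polynomial exactly when its output algorithm and the functions $h^i(\cdot)$ are polynomial; hence $k$ itself runs in polynomial time. If $k$ were also optimal, then on every input it would produce an allocation maximizing $g(v,\cdot)$, and in particular one could read off the optimal welfare value $g_{opt}(v)$ in polynomial time. This solves (the search, hence also the decision, version of) the total-welfare maximization problem of the CMAP in polynomial time, contradicting both its NP-hardness and the assumption $P \neq NP$.

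Therefore $k$ is not optimal, and by Theorem~\ref{degopt} it must be degenerate, which is precisely the assertion of the corollary. There is essentially no obstacle here: the only point worth making explicit — and the one I would spell out — is the routine reduction from ``$k$ is optimal'' to ``$P = NP$,'' namely that an optimal allocation algorithm yields the optimal objective value and thus decides the associated NP-hard language, and that polynomiality of the mechanism genuinely entails polynomiality of $k$ in isolation. Everything substantive is already carried by Theorem~\ref{degopt}.
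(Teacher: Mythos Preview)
Your proposal is correct and matches the paper's approach: the paper states the corollary with no proof beyond a \qed, treating it as immediate from Theorem~\ref{degopt}, and your write-up simply spells out that immediate step (polynomiality of $m$ forces $k$ polynomial, so optimality of $k$ would put the NP-hard optimization in $P$).
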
\qed

\smallspace Note that due to the revelation principle, the
theorems in this section hold for any mechanism where the agents
have dominant strategies. Similarly to Theorem \ref{degopt}, any
mechanism that uses VCG payments and has a non-optimal ex-post
Nash equilibrium  also has equilibria which are arbitrarily far
from optimal.

We now show how to generalize the theorems in this section to
affine-based mechanisms.

\begin{theorem}
\label{degoptaffine} If an affine-based mechanism $(k,p)$ for a
CMAP is truthful, then its output algorithm is either optimal or
degenerate.
\end{theorem}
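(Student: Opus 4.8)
The plan is to reduce Theorem~\ref{degoptaffine} to the unweighted case by essentially the same hybrid/forcing argument used in Theorem~\ref{degopt}, with every occurrence of the total welfare $g(\cdot,\cdot)$ replaced by the weighted welfare $g_a(\cdot,\cdot)$, and with the degeneracy ratio $r_k$ interpreted with respect to $g_a$ as well (one should check the ratio blow-up transfers back to the unweighted $g$, but since the $a_i$ are fixed strictly positive constants this is immediate). Concretely: let $(k,p)$ be a truthful affine-based mechanism for a CMAP, and suppose $k$ is not optimal with respect to $g_a$. As before we may take all $h^i(\cdot)\equiv 0$, so by Lemma~\ref{LAffineUtil} the utility of agent $i$ under truthful play is $\frac{1}{a_i}\,g_a((v^i,w^{-i}),k(w))$; since $a_i>0$, maximizing utility is the same as maximizing $g_a((v^i,w^{-i}),\cdot)$, which is exactly the structural fact the argument of Theorem~\ref{degopt} exploits.

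\textbf{Main steps, in order.} First, fix $v$ with $k(v)$ not $g_a$-optimal and let $y$ be a $g_a$-optimal output for $v$; define $z$ by lowering every coordinate $j$ with $y^i_j=0$ to $-\alpha$ for $\alpha$ arbitrarily large, exactly as in the proof of Theorem~\ref{degopt}, and form the hybrid sequence $v_0=v,\,v_1=(z^1,v^2,\dots,v^n),\dots,v_n=z$. Second, reprove the first claim: $y$ is $g_a$-optimal for every $v_j$. This uses the independence condition (so that agent $i$'s weighted contribution $a_i w^i(x^i)$ depends only on $x^i$, hence is unchanged along the $y$-supported coordinates when passing from $v_0$ to $v_j$) together with monotonicity (lowering coordinates off the support of $y$ can only decrease $a_i w^i(x)$ for any competing $x$, since $a_i>0$ preserves the inequality $w^i_j\le v^i_j$). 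Third, reprove the key inductive claim $g_a(v_1,k(v_1))<g_a(v_1,y)$ by the same contradiction argument: if equality held, monotonicity plus independence would force $g_a(v_0,k(v_1))>g_a(v_0,k(v_0))$, and then agent $1$ strictly gains by declaring $z^1$ instead of $v^1$ (since its utility is $\frac{1}{a_1}g_a(v_0,\cdot)$ evaluated at the resulting output), contradicting truthfulness; iterating gives $g_a(v_n,k(v_n))<g_a(v_n,y)=g_a(v_0,y)$. Fourth, invoke the forcing condition: for every $x\neq y$, $g_a(z,x)=a_0(x)+\sum_i a_i z^i(x^i)\to-\infty$ as $\alpha\to\infty$ (the $a_0$ term is a fixed constant and each $a_i>0$, so the divergence of $\sum_i z^i(x^i)$ guaranteed by the forcing condition is not killed by the weights), while $g_a(z,y)=g_a(v_0,y)$ stays fixed; hence $g_a(v_n,k(v_n))\to-\infty$, so $k$ is degenerate with respect to $g_a$. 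Finally, translate back: $g(v_n,k(v_n)) = \frac{1}{\max_i a_i}\cdot$\,(something)$\,\le$\,a fixed constant times $g_a(v_n,k(v_n))$ up to the bounded $a_0$ contribution, so the unweighted ratio $r_k$ is unbounded too.

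\textbf{The main obstacle} is the bookkeeping around the constant term $a_0(\cdot)$ and the weights $a_i$ in the forcing/monotonicity steps: one must make sure that (i) $a_0$ being an arbitrary fixed valuation over outputs contributes only a bounded additive shift (true since $O$ is finite, so $\max_o|a_0(o)|<\infty$), and (ii) the strict positivity of $a_1,\dots,a_n$ genuinely preserves every inequality direction used — worsening agent $i$'s type weakly decreases $a_i w^i(x^i)$, and the forcing divergence $\sum_j(\text{off-support coordinates})\to-\infty$ survives multiplication by a positive constant. Once these two points are checked, the rest is a verbatim transcription of the Theorem~\ref{degopt} argument. Because the changes are purely mechanical, I would present this, like Theorem~\ref{CharacterizationAffine}, as a proof sketch: state that $\tilde V$, the optimality notion, the type $z$, and the hybrid sequence are all defined with respect to $g_a$, point to Lemma~\ref{LAffineUtil} for the utility characterization, and note that the three claims in the proof of Theorem~\ref{degopt} go through unchanged with $g$ replaced by $g_a$, so that $k(z)$ has $g_a$-value tending to $-\infty$ while the optimum stays fixed, whence degeneracy — both with respect to $g_a$ and, since the weights and $a_0$ are fixed, with respect to the original objective $g$.
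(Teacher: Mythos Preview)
Your proposal is correct and follows essentially the same route as the paper: both present a sketch that transposes the hybrid argument of Theorem~\ref{degopt} to the weighted setting, replacing $g$ by $g_a$ and invoking Lemma~\ref{LAffineUtil} to identify each agent's utility with the weighted welfare. One minor simplification: your final ``translate back'' step is more work than necessary, since the forcing condition in Definition~\ref{CAMP} is already stated in terms of the unweighted $g$; once you have established $k(v_n)\neq y$ for all large $\alpha$, the forcing condition directly gives $g(v_n,k(v_n))\to-\infty$ while $g_{opt}(v_n)$ stays bounded (between $g(v_0,y)$ and $g_{opt}(v_0)$ by independence and monotonicity), so degeneracy with respect to $g$ follows without routing through $g_a$. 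Also, the reference to $\tilde V$ in your closing paragraph is a slip carried over from the combinatorial-auction section and plays no role here.
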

\begin{proof}{\bf(sketch)} The proof is almost identical to the
proof of Theorem \ref{degopt}. Let $v$ be a type such that $k(w)$
is not optimal w.r.t. to the corresponding affine transformation
$g_a$. We define a type vector $z$ similarly to Theorem
\ref{degopt} and consider a sequence of type vectors where each
agent in turn changes its type from $w^i$ to $z^i$. Due to the
incentive compatibility and Lemma \ref{LAffineUtil} , the utility
of each agent cannot increase, meaning that the weighted welfare
$g_a$ remains sub-optimal. Due to the forcing condition, all
outputs except the optimal, have arbitrarily high cost. This means
that the algorithm is degenerate.
\end{proof}

\cout{\smallspace {\bf Remarks} A randomized algorithm is a
distribution over deterministic ones. It is possible to define an
affine-based mechanism similarly when the allocation algorithm is
randomized. Call a randomized algorithm degenerate if there exist
inputs that cause the ratio between its expected weighted welfare
and the optimum to be arbitrarily high. If the algorithm is not
optimal, the only way it can ``escape'' degeneracy is by assigning
infinitely small probabilities to the ``forbidden'' outputs. Thus,
any truthful suboptimal VCG-based mechanism which has a
computational time (or number of coin tosses) which is not
dependent on the {\em size} of the types (e.g. the number of bits
of the actual costs in multicast transmission) is degenerate.}

\smallspace The compensation and bonus mechanism \cite{NR00}
identifies the utility of agents with the total welfare similarly
to VCG, i.e., the utility of an agent can be described similarly
to Lemma \ref{LLemmaVCGutil}. Thus, all the theorems in this
section can be applied to compensation and bonus mechanisms as
well.


\section{Second Chance Mechanisms}
\label{compfeas}

To date, affine maximization is the only known general method for
the development of truthful mechanisms. Therefore, the results in
the previous section do not leave much hope for the development of
truthful mechanisms for many complex problems.

This section proposes a method for circumventing this problem.
Consider a \vcgbased\ mechanism. An immediate consequence of Lemma
\ref{LLemmaVCGutil} is that the only reason for an agent to
misreport its type is to help the algorithm to improve the overall
result. This leads to the intuition that if the agents {\bf
cannot} improve upon the underlying algorithm, they can do no
better than be truthful. We would like to exploit this special
property of \vcgbased\ mechanisms and construct mechanisms that
are ``almost'' truthful.

Given {\em any} algorithm for the corresponding optimization
problem we define the {\em second chance} mechanism based on it.
This mechanism is a modification of the \vcgbased\ mechanism where
in addition to their valuations, the agents are allowed to submit
{\em appeal functions}. An appeal function allows the agent to
give the algorithm an input (vector of declared valuations) that
is different from the original input but without misreporting its
type. When the agents behave truthfully, the welfare obtained by
the mechanism is at least as good as the one obtained by the
algorithm's output.

We then formulate the rationale for truthfulness in second chance
mechanisms. Informally, our argument is as follows: Under
reasonable assumptions, in any situation in which the agent
believes it is beneficial for it to lie to the mechanism, it is
better for it to report its actual type to the mechanism and ask
its appeal to {\em check} whether this lie is indeed helpful.
Thus, the agent can construct a truthful strategy premised on the
fact that it is not aware of {\bf any} situation in which another
strategy is better for it. We believe that this is a strong
argument for truth-telling.

A generalization of our results to affine maximization and to
compensation and bonus mechanisms is straightforward.

\subsection{The Mechanism}
In this section we formulate the second chance mechanism and its
basic properties.

\begin{definition}
\label{appealDef} {\bf (appeal function)} Let $V = \prod_i V_i$
denote the type space of the agents. An {\em appeal} is a partial
function\footnote{A function $f:D \rightarrow R$ is called {\em
partial} if its domain is a subset of $D$, i.e. if $Dom(f)
\subseteq D$.} $l:V \rightarrow V$.
\end{definition}

The semantics of an appeal $l(.)$ is: ``when the agents' type
vector is $v = (v_1, \ldots, v_n)$, I believe that the output
algorithm $k(.)$ produces a better result (w.r.t. $v$) if it is
given the input $l(v)$ instead of the actual input $v$''. An
appeal function gives the agent an opportunity to improve the
algorithm's output. If $v$ is not in the domain of $l(.)$, the
semantics is that the agent does not know how to cause the
algorithm to compute a better result than $k(v)$.

The second chance mechanism is defined in Figure
\ref{fig:secChance}. It is a modification of VCG that allows the
agents to submit appeal functions as well.

\begin{figure} [h]
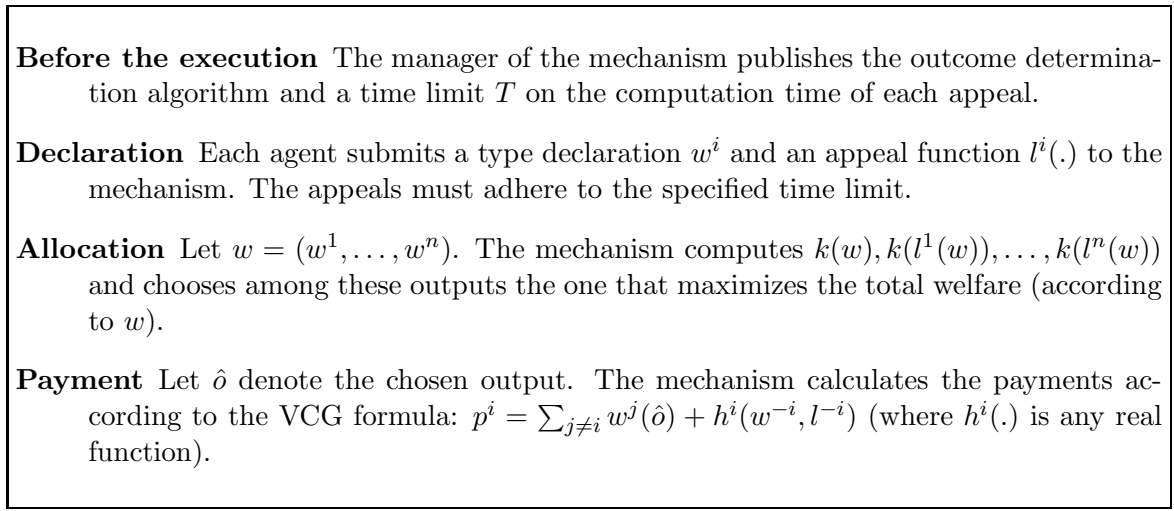

\label{fig:secChance} 
  \centering
  \begin{minipage}[h]{\textwidth}
\begin{center}
\fbox{\parbox{\textwidth}{

\begin{description}
    \item[Before the execution] The manager of the mechanism
    publishes the outcome determination algorithm and a time limit
    $T$ on the computation time of each appeal.

  \item[Declaration] Each agent submits a type declaration $w^i$ and an appeal function
$l^i(.)$ to the mechanism. The appeals must adhere to  the
specified time limit.

  \item[Allocation] Let $w = (w^1, \ldots, w^n)$.
The mechanism computes $k(w), k(l^1(w)), \ldots, k(l^n(w))$ and
chooses among these outputs the one that maximizes the total
welfare (according to $w$).

  \item[Payment] Let $\hat{o}$ denote the chosen output. The mechanism
calculates the payments according to the VCG formula: $p^i =
\sum_{j \neq i}w^j(\hat{o}) + h^i(w^{-i}, l^{-i})$ (where $h^i(.)$
is any real function).
\end{description}
}}
\end{center}
\end{minipage}
  \caption{The Second Chance Mechanism}
\end{figure}

\smallspace{\bf Remarks} The agents send programs that represent
their appeal functions to the mechanism. These programs are then
executed by the mechanism. The mechanism can terminate the
computation of each appeal after $T$ units of computation time
(and refer to the vector of declarations $w$ as if it is not in
the appeal's domain). Thus, we can assume w.l.o.g. that all
appeals adhere to the given time limit. A discussion on the choice
of the time limit and alternative representations of the appeal
functions appears in Section \ref{LtimeLimitChoice}. We believe
that it is possible to construct software tools and APIs that will
make the formulation of the appeals an easy task.

The functions $h^i(.)$ do not play any role in the agents'
considerations as every $h^i(.)$ is independent of $i$'s actions.
Until Section \ref{LobtainingIR} it is possible to simply assume
that $h^i(.) \equiv 0$ for all $i$. In Section \ref{LobtainingIR}
we will use these functions in order to satisfy individual
rationality.

\begin{definition}
\label{LtruthfulAct} {\bf (truthful action)} An {\em action} in
the second chance mechanism is a pair $(w^i, l^i)$ where $w^i$ is
a type declaration and $l^i(.)$ is an appeal function. An action
is called {\em truthful} if $w^i = v^i$.
\end{definition}

\noindent The following observation is a key property of the
mechanism.

\begin{proposition}\label{LsecChanceWelfare}
Consider a second chance mechanism with an output algorithm $k$.
For every type vector $v = (v^1, \ldots, v^n$), if all agents are
truth-telling, $g(v, \hat{o}) \geq g(w, k(v))$.
\end{proposition}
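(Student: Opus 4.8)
The plan is to derive this directly from the allocation rule of the second chance mechanism, since the statement is essentially a definitional observation. First I would unpack what ``all agents are truth-telling'' means: by Definition~\ref{LtruthfulAct}, each agent's action is $(w^i, l^i)$ with $w^i = v^i$, hence the declared type vector satisfies $w = (w^1, \ldots, w^n) = (v^1, \ldots, v^n) = v$. Consequently $k(w) = k(v)$ and $g(w, \cdot) = g(v, \cdot)$ as functions of the output.

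Next I would invoke the Allocation step of Figure~\ref{fig:secChance}: the mechanism computes the finite list of candidate outputs $k(w), k(l^1(w)), \ldots, k(l^n(w))$ and sets $\hat{o}$ to be the one among them that maximizes the total welfare measured according to $w$. (If $w$ is not in the domain of $l^i$ for some $i$, that candidate is simply absent from the list; this does not affect the argument.) In particular, $k(w)$ itself always belongs to this list, so $\hat{o}$ is at least as good as $k(w)$ with respect to $g(w, \cdot)$; that is, $g(w, \hat{o}) \geq g(w, k(w))$.

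Finally I would substitute $w = v$ into both sides: $g(v, \hat{o}) = g(w, \hat{o}) \geq g(w, k(w)) = g(v, k(v)) = g(w, k(v))$, which is exactly the claimed inequality. There is no genuine obstacle here --- the only point requiring a moment's care is the handling of appeals whose domain does not contain $w$, together with the observation that the untouched candidate $k(w)$ guarantees the bound regardless. The content of the proposition is simply that giving the agents a ``second chance'' can never decrease the welfare relative to running $k$ on the truthful input, because the original output $k(w)$ always remains available as a fallback.
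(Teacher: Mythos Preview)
Your argument is correct and is precisely the unpacking of what the paper means when it says the proof ``is immediate from the definition of the mechanism'': the paper gives no further detail beyond that one line. You have spelled out the two trivial ingredients (truth-telling forces $w=v$, and the allocation rule guarantees $g(w,\hat{o})\geq g(w,k(w))$ because $k(w)$ is always a candidate), which is exactly the intended reasoning.
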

\qed

\smallspace In other words, when the agents are truth-telling, the
result of the mechanism is at least as good as $k(v)$. The proof
is immediate from the definition of the mechanism. We now
formulate an analog of Lemma \ref{LLemmaVCGutil}. The proof is
similar to the lemma's proof and is henceforth omitted.

\begin{lemma}\label{LLemmaSecChanceUtil}{\bf (second chance utility)} Consider
a second chance mechanism. Let $\hat{o}$ be the chosen output. The
{\em utility} of agent $i$ equals $g((v^i, w^{-i}), \hat{o}) +
h^i(w^{-i}, l^{-i})$.
\end{lemma}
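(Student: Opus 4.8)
The plan is to prove Lemma~\ref{LLemmaSecChanceUtil} in exactly the same spirit as Lemma~\ref{LLemmaVCGutil}: simply expand the definition of the agent's utility in terms of its (true) valuation of the chosen output and the VCG-style payment the mechanism hands it, and observe that the cross terms recombine into the total welfare evaluated at the mixed type vector $(v^i, w^{-i})$. Concretely, let $\hat o$ be the output the mechanism selects (the best among $k(w), k(l^1(w)), \dots, k(l^n(w))$ with respect to the declared welfare). By quasi-linearity (Definition~\ref{MechanismDesignProblem}) the utility of agent $i$ is $v^i(\hat o) + p^i$, where by the payment rule of the mechanism in Figure~\ref{fig:secChance}, $p^i = \sum_{j \neq i} w^j(\hat o) + h^i(w^{-i}, l^{-i})$.

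First I would substitute the payment formula and group the $v^i(\hat o)$ term with the sum $\sum_{j\neq i} w^j(\hat o)$. This yields
\[
v^i(\hat o) + \sum_{j \neq i} w^j(\hat o) + h^i(w^{-i}, l^{-i}) = g\big((v^i, w^{-i}), \hat o\big) + h^i(w^{-i}, l^{-i}),
\]
since by definition $g(u, o) = \sum_j u^j(o)$, and the vector $(v^i, w^{-i})$ has agent $i$'s coordinate equal to $v^i$ and every other coordinate $j$ equal to $w^j$. This is precisely the claimed expression, so the lemma follows.

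There is essentially no obstacle here; the only thing worth noting is a point of care about what $\hat o$ is. Unlike in the plain VCG-based setting, $\hat o$ now depends not only on $w$ but also on all the appeal functions $l^1, \dots, l^n$ (it is whichever of $k(w)$ or $k(l^j(w))$ maximizes declared welfare). The identity above holds for this $\hat o$ regardless of how it was chosen, because the derivation only uses the form of the payment and quasi-linearity, not any optimality property of $\hat o$. So the statement is genuinely an identity about the utility of \emph{any} realized outcome of the mechanism, and the proof is a one-line expansion exactly as in Lemma~\ref{LLemmaVCGutil}; accordingly the authors omit it.
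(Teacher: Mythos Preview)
Your proof is correct and is precisely the approach the paper has in mind: it explicitly states that the proof is ``similar to the lemma's proof'' (i.e., Lemma~\ref{LLemmaVCGutil}) and omits it, and your derivation is exactly that one-line expansion of $v^i(\hat o)+p^i$ using the VCG-style payment rule of the second chance mechanism. Your side remark that the identity does not depend on how $\hat o$ is selected is accurate and appropriate.
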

\qed

\smallspace Therefore, informally, it is beneficial for an agent
to declare $w^i \neq v^i$ only if it either helps the output
algorithm $k(.)$ to compute a better result (w.r.t. $(v^i,
w^{-i})$) or it helps one of the appeals of the other agents.

Note that lying to a second chance mechanism may {\em harm} an
agent in two ways. First, it can damage the output algorithm
$k(.)$. Second, it can cause the mechanism to measure the welfare
according to a wrong type vector and thus cause it to choose an
inferior output.

\notation\ We say that a second chance mechanism is {\em
T-limited} if the time limit it specifies is $T$. Similarly, an
algorithm is called {\em T-limited} if its computational time
never exceeds $T$ units of computation.

The following proposition is obvious.

\begin{proposition} Consider a $T$-limited second chance
mechanism. If the output algorithm of the mechanism is also
$T$-limited, the overall computational time of the mechanism is
$O(nT)$.
\end{proposition}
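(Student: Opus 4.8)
The plan is to simply tally the work done by the manager, phase by phase, against the description in Figure~\ref{fig:secChance}. The declaration phase imposes no computation on the manager. In the allocation phase the mechanism first evaluates the $n$ appeals $l^1(w),\ldots,l^n(w)$; by the remark following the mechanism's definition the manager may run each appeal for at most $T$ units of computation and abort it otherwise, so this costs $O(nT)$. It then invokes the output algorithm $k$ on the $n+1$ inputs $w,l^1(w),\ldots,l^n(w)$; since $k$ is $T$-limited, each of these invocations costs at most $T$, contributing a further $O(nT)$.

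Next I would account for choosing $\hat o$ and computing the payments. Selecting, among the $n+1$ computed outputs, the one that maximizes $g(w,\cdot)=\sum_j w^j(\cdot)$ requires $O(n)$ welfare evaluations, and the VCG payments $p^i=\sum_{j\neq i}w^j(\hat o)+h^i(w^{-i},l^{-i})$ can be obtained from a single $O(n)$ computation of $\sum_j w^j(\hat o)$ together with $O(1)$ arithmetic per agent (treating each valuation query and each $h^i$ as unit cost, as is implicit in the statement). Because any appeal is run for at least one step before it can be aborted we have $T\geq 1$, so these $O(n)$ bookkeeping terms are dominated by the $O(nT)$ already incurred. Summing all contributions yields total running time $O(nT)$.

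There is essentially no obstacle: the only points needing care are to invoke explicitly the convention from the remarks after the mechanism's definition, namely that a slow or non-terminating appeal is aborted after exactly $T$ steps so that an adversarially chosen appeal cannot inflate the bound, and to charge the incidental arithmetic (welfare comparison and payment computation) at the same granularity as the algorithm's steps so that it does not dominate. With those conventions fixed, the bound is immediate, which is why the statement is flagged as obvious.
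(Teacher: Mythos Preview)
Your proposal is correct and matches the paper's treatment: the paper gives no proof at all, simply labeling the proposition as obvious and appending a \qed. Your phase-by-phase tally (evaluating $n$ appeals at cost $T$ each, then $n+1$ invocations of $k$ at cost $T$ each, plus lower-order bookkeeping) is exactly the intended justification behind that ``obvious''.
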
\qed

\subsubsection{A toy example}
Consider a combinatorial auction of two items.  A type of an agent
is a 3-tuple representing its value for every non empty subset of
items. Suppose that agent $i$ values the pair of items at $\$3$
million but values every single item at $\$1$ million. Its type
is, therefore, $v^i = \{ 3, 1, 1 \}$. Suppose that the agent
notices that the allocation algorithm often produces better
allocations if it declares $w^i = \{ 3, 0, 0 \}$ (i.e., it hides
its willingness to accept only one item). In a VCG-based mechanism
the agent may prefer to declare $w^i$ instead of its actual type.
This might cause two problems:

\begin{enumerate}
  \item Even when the others are truthful,
  there may be many type vectors $v^{-i}$ belonging to the other agents, for
  which declaring $w^i$ damages the chosen allocation, i.e.,
  $g((v^i, w^{-i}), k((w^i, w^{-i}))) < g((v^i, w^{-i}),k((v^i,
  w^{-i})))$.
  \item Even when this is not the case and every agent $i$ chooses a
  declaration $w^i$ such that $g((v^i, w^{-i}), k((w^i, v^{-i}))) \geq g((v^i, w^{-i}),
  k(w))$, it may be that according to the {\em actual} type vector $v$ the
  output $k(w)$ may be inferior to $k(v)$ (i.e., $g(v,k(w)) < g(v,k(v))$).
\end{enumerate}

\smallspace The second chance mechanism enables the agent to {\bf
check} whether declaring a falsified type would yield a better
result. Instead of declaring $w^i = \{ 3, 0, 0 \}$, the agent can
declare its actual type and define its appeal as $l^i(w') = (w^i,
w'^{-i})$. In this way the agent enjoys both worlds. In cases
where the falsified type is better, the mechanism will prefer
$k((w^i, w^{-i}))$ over $k((v^i, w^{-i}))$. In cases where the
truthful declaration is better, the mechanism will prefer $k((v^i,
w^{-i}))$. Note that the mechanism allows an appeal to modify not
only the declaration of the agent that submitted it but also the
whole vector of declarations. This will allow us to provide a
strong argument for truth-telling.

\paragraph{Possible Variants of the Second Chance Mechanism}

\smallspace One alternative definition of the mechanism is to let
the agents submit outcome determination algorithms instead of
appeals. It is possible to apply reasoning similar to ours to this
variant. However, formulating such output algorithms might be a
demanding task for many applications. There are also other more
delicate differences.

Another possibility is to define a multi-round variant of the
mechanism. In the first round the agents submit type declarations
$w$. Then, at each round, each agent gets the chance to improve
the allocation found by the algorithm $k(w)$. The mechanism
terminates when no agent improves the current allocation. The
strategy space of multi-round mechanisms is very complex. Yet,
under myopic behavior \cite{pa99}, arguments similar to ours can
be used to justify truthful behavior. Such arguments may explain
the relative success of ad hoc mechanisms such as iterative VCG
(IVG) and AUSM\footnote{Both mechanisms are in the spirit of our
second chance mechanism, as they let the agents improve the
allocation. The actual rules of these mechanisms are complicated
and are described by \shortciteR{BLP}.} reported by
\shortciteR{BLP}.

\paragraph{Standard Equilibria in Second Chance Mechanisms}
The second chance mechanism uses VCG payments and, therefore, the
theorems of the first part of the paper apply to it. From Lemma
\ref{LLemmaSecChanceUtil}, a vector of truthful actions is an ex
post equilibrium if and only if the resulting allocation $\hat{o}$
is optimal in the range of the algorithm. Moreover, consider agent
$i$ and let $(w^{-i}, l^{-i})$ be a set of actions of the other
agents. $(w^i, l^i)$ is a best response for the agent if and only
if the resulting allocation $\hat{o}$ is optimal in the range of
the underlying algorithm with respect to $(v^i, w^{-i})$. At least
intuitively, finding such a response is at least as hard as
finding an allocation that is optimal in the range of the
algorithm. Thus, one should not expect the agents to follow
equilibrium strategies in the traditional sense. We argue that
similar arguments can be made for every game in which computing
the best response is computationally difficult. Hence, an argument
that takes into account the agents' own limitations is required.
We note that we did not succeed in finding natural complexity
limitations under which truth-telling is an equilibria for the
agents. We leave this as an intriguing open problem.

\subsection{The Rationale for Truth-telling}
As we noted, standard equilibria should not be expected in second
chance mechanisms. This section formulates the rationale for
truth-telling under these mechanisms. We first introduce the
notion of feasibly dominant actions\footnote{We make a standard
distinction between an action and a strategy -- a mapping from the
agent's type to its action.} which takes into account the fact
that the agents' capabilities are limited. We then demonstrate
that under reasonable assumptions about the agents, truthful,
polynomial time, feasibly dominant actions exist.

\subsubsection{Feasible Truthfulness}

The basic models of equilibria in game theory are justified by the
implicit assumption that the agents are capable of computing their
best response functions. In many games, however, the action space
is huge and this function is too complex to be computed, even
approximately within a reasonable amount of time. In such
situations the above assumption seems no longer valid.

In this section we re-formulate the concept of dominant actions
under the assumption that agents have a limited capability of
computing their best response. Our concept is meant to be used in
the context of one stage games, i.e. games in which the agents
choose their actions without knowing anything about the others'
choices. The second chance mechanism is a one stage-game. In a
nutshell, an action is feasibly dominant if the agent is not {\em
aware} of {\em any} situation (a vector of the other agents'
actions) where another action is better for it.

\notation\ We denote the action space of each agent $i$ by $A^i$.
Given a tuple $a = (a^1, \ldots, a^n)$ of actions chosen by the
agents, we denote the utility of agent $i$ by $u^i(a)$.

\begin{definition}
\label{strategicknowledge} {\bf (revision function)} A {\em
revision function} of agent $i$ is any partial function of the
form $b^i:A^{-i} \rightarrow A^i$.
\end{definition}

\smallspace The semantics of $b^i(a^{-i})$ is ``If I knew that the
actions of the others are $a^{-i}$, I would choose $b^i(a^{-i})$
(instead of $a^i$)''. A revision function captures all the cases
where the agent knows how it would like to act if it knew the
others' actions. Note that optimal revision functions are standard
best-response functions. When a vector of actions $a^{-i}$ does
not belong to the domain of $b^i(.)$, the semantics is that the
agent prefers to stick to its action.

\begin{definition}
\label{feasiblebestresponse} {\bf (feasible non-regret)} Let $i$
be an agent, $b^i(.)$ its revision function, and $a^{-i}$ a vector
of actions for the other agents. An action $a^i$ satisfies the
{\em feasible non-regret} condition (w.r.t. $a^{-i}$ and $b^i$),
if either $a^{-i}$ is not in the domain of $b^i$ or
$u^i((b^i(a^{-i}), a^{-i})) \leq u^i(a)$.
\end{definition}

\smallspace In other words, other actions may be better against
$a^{-i}$, but the agent is unaware of them or cannot compute them
when choosing its action.

When the revision function of the agent is optimal, a feasible
non-regret is equivalent to the standard non-regret (best
response) condition.

\begin{definition}
\label{feasiblydominant} {\bf (feasibly dominant action)} Let $i$
be an agent, $b^i(.)$ its revision function. An action $a^i$ is
called {\em feasibly dominant} (w.r.t. $b^i(.)$) if for {\bf
every} vector $a^{-i}$ of the actions of the other agents, $a^i$
satisfies the feasible non-regret condition (w.r.t. $a^{-i}$ and
$b^i$).
\end{definition}

\smallspace Put differently, an action $a^{i}$ is feasibly
dominant if (when choosing its action) the agent is not aware of
any action $a'^{i}$ and any vector $a^{-i}$ of the actions of the
other agents, such that it is better off choosing $a'^{i}$ when
the others choose $a^{-i}$. A dominant action is always feasibly
dominant. When the revision function is optimal, a feasibly
dominant action is dominant.

\paragraph{Example}
In order to demonstrate the concept of feasibly dominant actions
consider a chess match in which Alice and Bob submit computer
programs that play on their behalf. Currently, of course it is not
known how to compute an equilibrium in chess and therefore
standard equilibria are not relevant for the analysis of such a
game. A program $a_A$ is feasibly dominant for Alice if she is not
aware of any possible program of Bob against which she is better
off submitting another program. \smallspace

\begin{definition}
\label{LfeasiblytrutfulAction} {\bf (feasibly truthful action)} An
action $a^i$ in the second chance mechanism is called {\em
feasibly truthful} if it is both, truthful and feasibly dominant.
\end{definition}

\subsubsection{Natural revision functions that give rise to feasibly truthful
actions}\label{LnatRevFunc}
Beforehand we showed that when the agents are truthful, the total
welfare is at least $g(v,k(v))$. We also argued that if a feasibly
truthful action is available, the agent has a strong incentive to
choose it. This subsection demonstrates that under reasonable
assumptions about the agents, polynomial time feasibly truthful
actions do exist.

\notation\ We let $\bot$ denote the empty appeal. By $(w, \bot)$
we denote an action vector where the declaration of each agent $i$
is $w^i$ and all the appeals are empty.

\begin{definition} {\bf (appeal-independent revision function)} A revision
function $b^i(.)$ is called {\em appeal independent} if every
vector in its domain includes only empty appeals, i.e. for all
$a^{-i} \in dom(b^i)$, there exists a vector $w^{-i}$ such that
$a^{-i} = (w^{-i}, \bot)$.
\end{definition}

We say that an appeal independent function is {\em $T$-limited} if
its own computational time is bounded by $T$ and so is every
appeal function in its range.

The class of appeal-independent revision functions represents
agents that only explore the output algorithm (or alternatively,
base their choice of action solely on the output algorithm). This
approach seems reasonable as the space of appeals of the other
agents is huge, with no apparent structure. At least intuitively,
it seems unreasonable that an agent will be able to lie in a way
that will improve the result of the appeals of the other agents
with significant probability. Moreover, as we commented, an agent
has an obvious potential loss from misreporting its type.

\begin{theorem}\label{appealIndAction}
Consider a second chance mechanism with a $T$-limited output
algorithm. Suppose that an agent has a $T$-limited
appeal-independent revision function. For every $T' = \Omega(T)$,
if the mechanism is $T'$-limited, the agent has a feasibly
truthful action.
\end{theorem}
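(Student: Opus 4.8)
The plan is to construct an explicit feasibly truthful action for the agent and then verify that it satisfies the feasible non-regret condition against every vector of the other agents' actions, using the agent's given appeal-independent revision function. Let $b^i(\cdot)$ be the $T$-limited appeal-independent revision function of agent $i$. The natural candidate action is $a^i = (v^i, l^i)$, where the agent reports truthfully and uses an appeal $l^i$ that effectively \emph{simulates} $b^i$: on input a declaration vector $w$, the appeal first extracts $w^{-i}$, checks whether $(w^{-i},\bot) \in \mathrm{dom}(b^i)$, and if so computes $b^i(w^{-i},\bot) = (\tilde w^i, \tilde l^i)$; it then applies the revised appeal $\tilde l^i$ to the vector $(\tilde w^i, w^{-i})$ to obtain a new declaration vector, which it outputs as $l^i(w)$. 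If $(w^{-i},\bot)\notin\mathrm{dom}(b^i)$, or if any of these computations would exceed the time budget, the appeal is undefined at $w$ (so $w\notin\mathrm{dom}(l^i)$). The point is that whatever outcome $k$ would produce when agent $i$ deviates to the action $b^i(w^{-i},\bot)$, the mechanism — which computes $k$ on all appeals and on $w$ itself — will have that outcome available among its candidates when agent $i$ instead plays $a^i$.

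Next I would check the timing. Since $b^i$ is $T$-limited and every appeal in its range is $T$-limited, the composite computation inside $l^i$ — evaluating $b^i$, then evaluating the appeal $\tilde l^i$ — runs in $O(T)$ time. Hence $l^i$ is $O(T)$-limited, and since the mechanism is $T'$-limited with $T' = \Omega(T)$, we may pick the constant so that $l^i$ always completes within the mechanism's time limit; thus $l^i$ is a legitimate appeal for the mechanism and is never truncated. This is a routine but necessary bookkeeping step.

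The core step is the non-regret verification. Fix any action vector $a^{-i}$ of the other agents. If $a^{-i}\notin\mathrm{dom}(b^i)$ then the feasible non-regret condition holds vacuously. So suppose $a^{-i} = (w^{-i}, \bot)$ (appeal-independence forces the appeals to be empty) and $a^{-i}\in\mathrm{dom}(b^i)$, with $b^i(a^{-i}) = (\tilde w^i, \tilde l^i)$. By Lemma~\ref{LLemmaSecChanceUtil}, agent $i$'s utility (ignoring the irrelevant $h^i$ term) under any action $(x^i, m^i)$ is $g((v^i, w^{-i}), \hat o)$ where $\hat o$ is the welfare-maximizing output, measured w.r.t.\ the declared vector, among $k$ applied to $(x^i, w^{-i})$, to $m^i(x^i,w^{-i})$, and to the other agents' (empty) appeals. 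When the agent deviates to $b^i(a^{-i}) = (\tilde w^i, \tilde l^i)$, its utility is $g((v^i,w^{-i}), \hat o_{\mathrm{dev}})$, where $\hat o_{\mathrm{dev}}$ ranges over $k(\tilde w^i, w^{-i})$ and $k(\tilde l^i(\tilde w^i, w^{-i}))$ (the other agents contribute nothing since their appeals are empty). By construction of $l^i$, when agent $i$ plays its truthful action $a^i = (v^i, l^i)$ the mechanism computes, in particular, $k(l^i(v^i, w^{-i})) = k(\tilde l^i(\tilde w^i, w^{-i}))$, so the deviation's winning output is among the mechanism's candidates; since the mechanism selects the candidate maximizing $g((v^i, w^{-i}),\cdot)$, we get $u^i((a^i, a^{-i})) \ge u^i((b^i(a^{-i}), a^{-i}))$. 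This holds for every $a^{-i}$, so $a^i$ is feasibly dominant, and being truthful, it is feasibly truthful.

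The main obstacle I anticipate is the one subtlety just used: I must make sure the mechanism, when evaluating agent $i$'s own appeal $l^i$, supplies it with the \emph{full} declared vector $w = (v^i, w^{-i})$ so that $l^i$ can reconstruct exactly the scenario that $b^i$ is reasoning about — this is precisely why the definition of appeal functions allows an appeal to rewrite the entire declaration vector, not just the submitting agent's component (as the toy example emphasizes). One also has to be slightly careful that $k(v^i, w^{-i})$, which the mechanism always computes, dominates the non-deviating branch $k(\tilde w^i, w^{-i})$ only \emph{after} the mechanism's max — but since the mechanism maximizes over all its candidates including the appeal output, no separate argument about the $k(\tilde w^i, w^{-i})$ branch is needed; it suffices that the appeal reproduces the revised appeal's output. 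Everything else is routine.
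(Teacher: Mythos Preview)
Your construction has a genuine gap, and it is precisely the point you wave away in the last paragraph. When the agent deviates to $b^i(a^{-i}) = (\tilde w^i, \tilde l^i)$ against $a^{-i} = (w^{-i}, \bot)$, the mechanism's candidate set is $\{o_1, o_2\}$ with $o_1 = k(\tilde w^i, w^{-i})$ and $o_2 = k(\tilde l^i(\tilde w^i, w^{-i}))$, and it selects $\hat o_{\mathrm{dev}}$ as the better of these \emph{according to the declared vector} $(\tilde w^i, w^{-i})$. Under your truthful action $a^i = (v^i, l^i)$, the candidate set is $\{o_0, o_2\}$ with $o_0 = k(v^i, w^{-i})$: your appeal reproduces only $o_2$, not $o_1$. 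Nothing prevents $\hat o_{\mathrm{dev}} = o_1$ while $g((v^i,w^{-i}), o_1)$ strictly exceeds both $g((v^i,w^{-i}), o_0)$ and $g((v^i,w^{-i}), o_2)$; in that case the deviation beats your truthful action and feasible non-regret fails. (The extreme case is when $\tilde l^i = \bot$ or is undefined at $(\tilde w^i, w^{-i})$: then the deviation's sole candidate is $o_1$, and your $l^i$ is undefined, leaving only $o_0$.) Your claim that ``no separate argument about the $k(\tilde w^i, w^{-i})$ branch is needed'' is simply false: that branch is not covered by anything the mechanism computes under $a^i$.

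The paper fixes this by making the appeal do slightly more work. Its $l^i$, on input $(v^i, w^{-i})$, computes the revised action $(w^i,\tau^i) = b^i((w^{-i},\bot))$, then evaluates \emph{both} $o_1 = k(w^i, w^{-i})$ and $o_2 = k(\tau^i(w^i, w^{-i}))$ internally, and returns whichever input vector yields the larger $g((v^i, w^{-i}),\cdot)$. This guarantees that under truth-telling the mechanism's best candidate is at least as good (for the true welfare $(v^i,w^{-i})$) as each of $o_1$ and $o_2$, hence at least as good as $\hat o_{\mathrm{dev}}$ regardless of which of the two the deviation selected. The extra call to $k$ keeps the appeal $O(T)$-limited, so the timing argument goes through unchanged.
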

\begin{proof} Let $b^i(.)$ be the agent's revision function.
Define an appeal $l^i(.)$ as follows. For every vector $w^{-i}$,
let $(w^i, \tau^i) = b^i((w^{-i}, \bot))$. Let $w = (w^i,
w^{-i})$. Consider the outputs $o_1 = k(w)$ and $o_2 =
k(\tau^i(w))$. We define $l^i(w)$ to be the {\em better} of the
two outputs, i.e., $l^i(w) = \arg\max_{j = 1,2} \, g((v^i,
w^{-i}), o_j)$. Intuitively, $l^i(.)$ checks whether declaring
$w^i$ is helpful for the agent.

\begin{claim} $a^i = (v^i, l^i)$ is feasibly truthful.
\end{claim}
\begin{proof} If it is not, there exists a vector $a^{-i} = (w^{-i},
\bot)$ in the domain of $b^i(.)$ such that $u(a^i, a^{-i}) <
u(b^i(a^{-i}), a^{-i})$. Let $b^i(a^{-i}) = (w^i, \tau^i)$. Recall
that according to Lemma \ref{LLemmaSecChanceUtil}, the agent's
utility is equivalent to the total welfare $g((v^i, w^{-i}), o)$
of the chosen output $o$ (up to adding $h^i(.)$, which is
independent of the agent's actions).

Consider the case when the agent's action is $b^i(a^{-i})$. Let
$\hat{o}$ denote the chosen output in this case. According to the
definition of the mechanism, $\hat{o}$ is taken from the set
$\{o_1, o_2\}$ and the welfare is measured according to the
declaration $w$.

When the agent chooses the truthful action $a^i$, the output
(denoted $\tilde{o}$) is chosen from the outputs $o_0 = k((v^i,
w^{-i}))$ (from the definition of the mechanism), and both, $o_1,
o_2$ (from the definition of $l^i$). This is a {\bf superset} of
the set outputs of the first case. Moreover, the output is chosen
according to the ``right'' type vector $(v^i, w^{-i})$. Thus,
$g((v^i, w^{-i}), \tilde{o}) \geq g((v^i, w^{-i}), \hat{o})$,
implying that the agent has a higher utility in the second case --
a contradiction.
\end{proof}

\smallspace It remains to show that $l^i(.)$ is
$\Omega(T)$-limited. This is obvious as both, $k(.)$ and
$\tau^i(.)$ are $T$-limited. This completes the proof of the
theorem.
\end{proof}

\smallspace Given the agent's revision function, it is easy to
construct the appeal $l^i(.)$ defined above (i.e., construct the
program that computes it). Thus, if the agent has such an appeal
independent function, it can guarantee {\bf itself} a feasibly
dominant action.

A more general class of revision functions can be found in the
Appendix. Interestingly, there is a tradeoff between the
generality of the class and the time limit, which suffices for
feasible truthfulness.

\subsection{Remarks on the Choice of the Time Limit}\label{LtimeLimitChoice}

Sections \ref{LnatRevFunc} and \ref{boundedRevisionFD} demonstrate
two natural classes of revision functions under which the agents
have polynomial time feasibly truthful actions. We do not claim
that every revision function in practice will fall into these
categories. Yet, it is plausible that this will be the case in
many applications. In general, there exists a tradeoff between the
generality of the class of the revision functions and the time
limit required for feasible truthfulness. In particular, without
any time limit, submitting an optimal appeal is dominant. On the
other hand, it is plausible that small time limits will suffice in
practice. We leave a more comprehensive study of this tradeoff to
future research.

An interesting future direction is to develop representations of
the appeal functions that {\bf relate} the time limit imposed on
each agent to its actual revision function. One possibility is to
represent the appeals by decision trees where the agents are
required to supply for each leaf $\alpha$, a type vector
$v_{\alpha}$, such that the algorithm's result is strictly
improved when it is given $l(t_{\alpha})$ instead of the actual
input $v_{\alpha}$. $v_{\alpha}$ proves to the mechanism that the
computational time required to compute the leaf $\alpha$ is indeed
needed in order to represent the agent's revision function. A
related possibility is to allow the agent to purchase additional
computational time.

Currently, we do not know whether every polynomial class of
revision functions guarantees the existence of polynomial feasibly
truthful actions. If an agent has substantial knowledge of the
appeal space of the other agents, it may be able to find a
falsified declaration that causes ``typical'' appeals to produce
better results. In such a case, it may be beneficial for an agent
to lie. We do not know whether such knowledge will exist in
practice. If yes, it may be possible to overcome this by allowing
the agents to submit meta-appeals, i.e., functions that let the
agents modify the input of the appeals of the other agents. We
leave this to future research.

\subsection{Obtaining Individual Rationality}\label{LobtainingIR}
\label{partic} A basic desirable property of mechanisms is that
the utility of a truthful agent is guaranteed to be non-negative
(individual rationality). In this section we construct a variant
of second chance mechanisms that satisfies this property.

Let $g_{opt}(v)$ denote the optimal welfare obtained when the type
vector is $v$ . We shall assume that for each agent $i$, there
exists a type ${\underline{v}}^i$ such that for every $v = (v^1,
\ldots, v^n)$, $g_{opt}((\underline{v}^i, v^{-i})) \leq
g_{opt}(v)$. We call such a type the {\em lowest}. In a
combinatorial auction for example, the lowest type is defined by
the zero valuation $\underline{v}^i(s) = 0$ for every combination
$s$ of items.

The Clarke mechanism \cite{Cla71} is a VCG mechanism in which
$h^i(w^{-i}) = - g_{opt}(\underline{v}^i, w^{-i})$, i.e., $p^i(w)
= \sum_{j \neq i} w^j(opt(w)) - g_{opt}(\underline{v}^i,
(w^{-i}))$. In other words, each agent pays the welfare loss it
causes to the society. Thus, it is natural to define the payment
of a VCG-based mechanism as $ \sum_{j \neq i} w^j(opt(w)) -
g((\underline{v}^i, w^{-i}), k((\underline{v}^i, w^{-i})))$.

Like truthfulness, individual rationality may not be preserved
when the optimal algorithm in the Clarke mechanism is replaced by
a sub-optimal one. In order to fix this we need to ensure that the
result of the algorithm will not improve when the declaration
$w^i$ is replaced by the lowest type $\underline{v}^i$.

\begin{definition}{\bf (lowest type closure)} Given an allocation algorithm $k(w)$ we define
its {\em lowest type closure} $\tilde{k}$ as the best allocation
(according to $w$) among  the outputs $(k(w), k((\underline{v}^1,
w^{-1})), \ldots, k((\underline{v}^n, w^{-n})))$.
\end{definition}

\smallspace Since $\tilde{k}(.)$ calls $k(.)$ $n$ times, if $k$ is
$T$-limited, then $\tilde{k}$ is $O(nT)$-limited.

\begin{claim} For every $w$, $g(w, \tilde{k}(w)) \geq g((\underline{v}^i, w^{-i}), k((\underline{v}^i,
w^{-i})))$.
\end{claim}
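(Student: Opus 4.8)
The plan is to read off the result almost directly from the definition of the lowest type closure, with one extra comparison of welfare functions at the end. Fix the agent $i$ appearing in the statement and an arbitrary declaration vector $w$, and abbreviate $o^\ast := k((\underline{v}^i, w^{-i}))$, the output $k$ produces when agent $i$'s declaration is replaced by the lowest type. By definition, $\tilde{k}(w)$ is the output maximizing $g(w, \cdot)$ over the finite list $(k(w), k((\underline{v}^1, w^{-1})), \ldots, k((\underline{v}^n, w^{-n})))$, and this list contains $o^\ast$ as its $(i{+}1)$-st entry. Hence the first step is the trivial inequality $g(w, \tilde{k}(w)) \ge g(w, o^\ast)$.

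The second step is to compare $g(w, o^\ast)$ with $g((\underline{v}^i, w^{-i}), o^\ast)$. Expanding the welfare, $g(w, o^\ast) - g((\underline{v}^i, w^{-i}), o^\ast) = w^i(o^\ast) - \underline{v}^i(o^\ast)$, since the two type vectors agree on all agents $j \neq i$. Because $\underline{v}^i$ is the lowest type of agent $i$, its valuation lies pointwise below $w^i$ (in the combinatorial auction instantiation this is immediate: $\underline{v}^i \equiv 0$ and all valuations are non-negative by free disposal), so $w^i(o^\ast) - \underline{v}^i(o^\ast) \ge 0$ and therefore $g(w, o^\ast) \ge g((\underline{v}^i, w^{-i}), o^\ast)$. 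Chaining this with the first step yields
\[ g(w, \tilde{k}(w)) \ \ge\ g(w, o^\ast) \ \ge\ g((\underline{v}^i, w^{-i}), o^\ast) \ =\ g((\underline{v}^i, w^{-i}), k((\underline{v}^i, w^{-i}))), \]
which is exactly the claim.

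The only point that needs care is the second step: the whole argument rests on the fact that downgrading agent $i$'s declaration to $\underline{v}^i$ cannot increase that agent's contribution to the measured welfare. For combinatorial auctions this is just non-negativity of the valuations, but in full generality one should make sure the base point $\underline{v}^i$ is chosen to be pointwise minimal (a stronger property than the $g_{opt}$-monotonicity used to motivate the notion of a ``lowest'' type), since an arbitrary output algorithm $k$ carries no monotonicity of its own that could substitute for it. The first step, by contrast, is a pure ``best element of a list containing $o^\ast$'' observation and requires nothing.
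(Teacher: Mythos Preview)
Your proof is correct and follows essentially the same two-step approach as the paper: first use that $k((\underline{v}^i,w^{-i}))$ is one of the candidates over which $\tilde{k}$ maximizes, then observe that replacing $w^i$ by $\underline{v}^i$ can only lower the welfare at a fixed output. Your closing remark is well taken: the paper's formal definition of a ``lowest'' type is stated in terms of $g_{opt}$, whereas what the argument actually uses (both yours and the paper's) is the pointwise inequality $\underline{v}^i(o)\le w^i(o)$; in the combinatorial-auction instantiation this is immediate since $\underline{v}^i\equiv 0$.
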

\begin{proof} Since $k((\underline{v}^i,
w^{-i}))$ is a candidate output that $\tilde{k}$ tests, $g(w,
\tilde{k}(w)) \geq g(w, k((\underline{v}^i, w^{-i})))$. Given the
definition of $\underline{v}^i$, $g(w, k((\underline{v}^i,
w^{-i}))) \geq g((\underline{v}^i, w^{-i}), k(\underline{v}^i,
w^{-i}))$, the claim follows.
\end{proof}

\begin{definition}{\bf (second chance-IR)} Given an allocation algorithm $k(w)$
and a time limit $T$ we define the corresponding {\em second
chance-IR mechanism} as the second chance mechanism with output
algorithm $\tilde{k}(.)$, time limit $T$, and for every agent $i$,
$h^i(w^{-i}) =  - g((\underline{v}^i, w^{-i}), k((\underline{v}^i,
w^{-i})))$.
\end{definition}

\smallspace The utility of a truthful agent in the above mechanism
equals $u^i = g(w, \hat{o}) - g((\underline{v}^i, w^{-i}),
k((\underline{v}^i, w^{-i}))) \geq g(w, k(w)) -
g((\underline{v}^i, w^{-i}), k((\underline{v}^i, w^{-i}))) \geq
0$. Therefore, the mechanism satisfies individual rationality.

\section{Conclusion and Future Research} \label{epilog}
This paper studies VCG mechanisms in which the optimal outcome
determination algorithm is replaced by some sub-optimal but
computationally tractable algorithm. The first part of the paper
shows that for a wide range of problems, such mechanisms lose the
game theoretic virtues of their optimal counterparts. Similar
results hold for affine maximization. These results do not leave
much hope for the development of polynomial time truthful
mechanisms for many problems of high complexity.

The second part of the paper proposes a general method for
overcoming the difficulty of constructing truthful mechanisms.
Given any algorithm for the underlying optimization problem we
define the second chance mechanism based on it. We demonstrate
that under reasonable assumptions about the agents, truth-telling
is still the rational strategy for the agents. When the agents are
truthful, the welfare obtained by the mechanism is at least as
good as the one obtained by the underlying algorithm.

Successful implementation of second chance mechanisms relies on
several tools to be developed -- in particular, tools for the
description of valuations and appeal functions. These
``engineering'' issues require further exploration.

It is important to stress that the second chance method has not
yet been tested. In particular, the truthfulness of the agents
should be validated experimentally. On the other hand, we believe
that in practice, small time limits on the agents' appeals are
likely to guarantee the truthfulness of the agents. Several
questions regarding the payment properties of second chance
mechanisms are open. We leave them for future research.

Several open questions, which directly stem from this work, are
raised within the body of the paper.

\acks{We thank Abraham Newman and Motty Perry for helpful
discussions at various stages of this work. We thank Ron Lavi,
Ahuva Mu'alem, Elan Pavlov, Inbal Ronen, and the anonymous
reviewers for comments on earlier drafts of this paper. Noam Nisan
was supported by grants from the Israel Science Foundation and
from the USA-Israel Binational Science Foundation. Amir Ronen was
supported in part by grant number $969/06$ from the Israel Science
Foundation. A preliminary version of this paper appeared in the
proceedings of the 3rd ACM Conference on Electronic Commerce (EC'
01).}

\appendix

\section{$d$-bounded Revision Functions}
The class of $d$-bounded revision functions represents agents
that, in addition to the output algorithm, explore a polynomial
family of potential appeals of the other agents. This class is a
generalization of $d$-limited appeal-independent functions.

\begin{definition} {\bf ($d$-bounded revision function)}
We say that a revision function $b^i(.)$ is {\em $d$-bounded} if
the following hold:
\begin{enumerate}
  \item The revision function $b^i(.)$ is $O(n^d)$-limited.
  \item Let
  \[ L = \{l_j \, | \, \exists l^{-i,-j}, w^{-i} s.t. \, (w^{-i},
    (l^i,l^{-i,-j}))\in Dom(b^i)\}
 \;  \bigcup \;
  \{l_i \, | \, \exists (w^{-i}, l^{-i}), w^i s.t. \, (w^i,
  l^i) = b^i((w^{-i}, l^{-i})) \} \] be the family of all appeals that appear in
  either the domain or range of $b^i(.)$. Then $|L| = O(n^d)$.
  \item There exists a constant $c$ such that every appeal $l \in L$ is $c n^d$-limited.
\end{enumerate}
\end{definition}

\begin{theorem}\label{boundedRevisionFD}
Consider a second chance mechanism with an $O(n^d)$-limited output
algorithm. Suppose that an agent has a $d$-bounded revision
function. For every $T' = \Omega(n^{2d})$, if the mechanism is
$T'$-limited, the agent has a feasibly truthful action.
\end{theorem}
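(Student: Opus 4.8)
The plan is to proceed exactly as in the proof of Theorem~\ref{appealIndAction}: from the agent's $d$-bounded revision function $b^i(.)$ I would build a single appeal $l^i(.)$ and show that the truthful action $a^i=(v^i,l^i)$ is feasibly dominant w.r.t. $b^i(.)$. The one new wrinkle is that $b^i$ may now revise against action vectors of the other agents whose appeals are non-empty; but $d$-boundedness guarantees that every such appeal lies in a fixed polynomial-size family $L$ whose members are themselves $O(n^d)$-limited, so the agent can \emph{re-run} all of them inside $l^i$.

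First I would construct $l^i$, hard-coding $v^i$, $b^i$, and $L$ into it. When the agent is truthful the vector handed to $l^i$ is $w=(v^i,w^{-i})$, so $l^i$ can read off $w^{-i}$ and it knows the agent's true valuation. Using $b^i$ and $L$, the appeal forms the collection $R$ of all declarations $\hat w^i$ that can occur as the type-component of $b^i(a^{-i})$ over action vectors $a^{-i}$ whose declaration part is $w^{-i}$ and whose appeals are drawn from $L$ (this is the step where $d$-boundedness is used to keep $R$ small and enumerable within the time budget). For each $\hat w^i\in R$, $l^i$ builds the candidate inputs $(\hat w^i,w^{-i})$ and $\lambda((\hat w^i,w^{-i}))$ for every $\lambda\in L$, runs $k$ on each of them, and outputs the candidate $w^\star$ for which $g((v^i,w^{-i}),k(w^\star))$ is largest. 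Intuitively $l^i$ recreates, measured against the \emph{true} valuation $v^i$, every output the mechanism could possibly produce were the agent to follow $b^i$'s advice.

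Next I would verify the feasible non-regret condition for an arbitrary $a^{-i}=(w^{-i},l^{-i})\in Dom(b^i)$. By definition of $L$ each $l^j$ ($j\neq i$) lies in $L$; writing $(\hat w^i,\hat l^i)=b^i(a^{-i})$ we also have $\hat l^i\in L$ and $\hat w^i\in R$. If the agent plays $b^i(a^{-i})$, the mechanism chooses an output $\hat o$ from $\{k((\hat w^i,w^{-i})),\,k(\hat l^i((\hat w^i,w^{-i})))\}\cup\{k(l^j((\hat w^i,w^{-i}))):j\neq i\}$, and by construction every member of this set is among the outputs $l^i$ compared, so $g((v^i,w^{-i}),k(l^i(w)))\ge g((v^i,w^{-i}),\hat o)$ with $w=(v^i,w^{-i})$. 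By Lemma~\ref{LLemmaSecChanceUtil} the agent's utility when it plays $b^i(a^{-i})$ equals $g((v^i,w^{-i}),\hat o)+h^i(w^{-i},l^{-i})$ (there $\hat o$ maximizes welfare against the \emph{wrong} vector $(\hat w^i,w^{-i})$, which only hurts the agent). When instead it plays $a^i=(v^i,l^i)$, the chosen output is measured against the correct vector $(v^i,w^{-i})$ and $k(l^i(w))$ is one of the options, so its utility is at least $g((v^i,w^{-i}),k(l^i(w)))+h^i(w^{-i},l^{-i})\ge g((v^i,w^{-i}),\hat o)+h^i(w^{-i},l^{-i})$. This is exactly feasible non-regret; since $a^{-i}$ was arbitrary, $a^i$ is feasibly dominant, and it is truthful by construction, hence feasibly truthful.

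Finally the time bookkeeping: producing $R$ costs one call to the $O(n^d)$-limited $b^i$ per relevant scenario, and for each $\hat w^i\in R$ one runs the $O(n^d)$ members of $L$ (each $O(n^d)$-limited) together with one call to the $O(n^d)$-limited $k$, so $l^i$ runs in $O(n^{2d})$ time; since the mechanism is $T'$-limited with $T'=\Omega(n^{2d})$, $l^i$ is never truncated and the agent really has this action available. I expect the delicate point to be precisely the claim that $R$ — and the collection of scenarios $b^i$ has to be simulated against — is polynomially bounded and polynomially enumerable from the declared vector alone (the appeal never sees the other agents' appeals), so the structural constraints packed into $d$-boundedness (finiteness and polynomial size of $L$, and the $O(n^d)$ running-time bounds on $b^i$ and on the members of $L$) have to be used with care; the remainder is a routine adaptation of Theorem~\ref{appealIndAction} via Lemma~\ref{LLemmaSecChanceUtil}.
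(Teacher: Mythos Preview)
Your construction diverges from the paper's in one essential respect: the paper's appeal $l^i$ does \emph{not} enumerate any set $R$ of candidate lies. On input $w$ it simply computes $o_0=k(w)$ together with $o_j=k(\tau_j(w))$ for each of the $|L|=O(n^d)$ appeals $\tau_j\in L$, and returns the candidate maximizing $g((v^i,w^{-i}),\cdot)$. That is $O(n^d)$ invocations of $O(n^d)$-limited routines (the $\tau_j$'s followed by $k$), so the $O(n^{2d})$ bound falls out immediately. The paper's superset argument then rests solely on the fact that, because $a^{-i}\in Dom(b^i)$, every appeal $l^j$ of the other agents and the agent's own revised appeal $\tau^i$ already lie in $L$; no reconstruction of the lie $\hat w^i$ is attempted.

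The extra loop you introduce over $\hat w^i\in R$ is exactly where your argument breaks, and it breaks at the point you yourself flagged as delicate. Nothing in the definition of a $d$-bounded revision function controls $|R|$: for a fixed $w^{-i}$ there may be up to $|L|^{n-1}$ action vectors $a^{-i}=(w^{-i},l^{-i})$ with each $l^j\in L$, and $b^i$ can map these to just as many distinct declarations $\hat w^i$. Hence ``one call to $b^i$ per relevant scenario'' is already exponential, and even if $R$ were handed to you, your own accounting yields $|R|\cdot|L|\cdot O(n^d)$ work rather than $O(n^{2d})$. The $d$-boundedness hypotheses bound $|L|$ and the running times of $b^i$, $k$, and the members of $L$; they do \emph{not} bound the number of scenarios compatible with a given $w^{-i}$ nor the size of the declaration-range of $b^i$ restricted to such scenarios. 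So the claim that $R$ is ``polynomially bounded and polynomially enumerable from the declared vector alone'' does not follow from the stated assumptions, and without it your $l^i$ need not fit within the mechanism's $\Omega(n^{2d})$ time limit.
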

\begin{proof} Let $i$ be the agent and let $b^i$ be its revision function.
We again use a simulation argument in order to define the appeal
$l^i(.)$. For every vector $w^{-i}$ we compute the following
outputs:
\begin{enumerate}
  \item $o_0 = k(w)$.
  \item Similarly to the the proof of Theorem \ref{appealIndAction}, let $L = \{ \tau_1 \ldots \tau_{|L|} \}$
  be the family of all appeal functions that are in the domain or the
  range of $b^i$. For all $j = 1, \ldots |L|$ define $o_j = k(\tau_j(w)))$.
  \item Define $l(w) = \arg\max_{0 \geq j \geq |L|} \, g((v^{i}, w^{-i}),
  o_j)$ as the output with the maximum welfare according to $(v^{i},
  w^{-i})$ among all the outputs defined above.
\end{enumerate}

\begin{claim} $l^i(.)$ is $n^{2d}$-limited.
\end{claim}
\begin{proof} W.l.o.g. the running time of $k(.)$ is bounded by $cn^d$. Otherwise, we will raise the constant.
According to the definitions, the appeal $l^i$ performs $n^d + 1$
computations, each requiring at most $cn^d$ time units. Thus, the
overall computation takes at most $O(n^{2d})$.
\end{proof}

\begin{claim} $a^i = (v^i, l^i)$ is feasibly truthful.
\end{claim}
\begin{proof} Assume by contradiction that there exists an action
vector $a^{-i}$ in $dom(b^i)$ such that $u((a^i, a^{-i}) <
u((b^i(a^{-i}), a^{-i})$.

Consider the case when the agent chooses $b^i(a^{-i}) = (w^i,
\tau^i)$. The mechanism takes the output $\hat{o}$ that maximizes
the welfare (according to $w$) from the following set $S$ of
outputs:

\begin{enumerate}
  \item $o_0 = k(w)$.
  \item $o_j = k(l_j(w))$ for every $j \neq i$, i.e. the result of
  the appeals of the other agents.
  \item $o_i = k(\tau^i(w))$.
\end{enumerate}

When the agent chooses $a^i$, the outputs are measured according
to the ``right'' type vector $(v_i, w^{-i})$. Moreover, it is
taken from the following {\bf superset} of the outputs in $S$:
\begin{enumerate}
  \item $o'_0 = k((v_i, w^{-i}))$ (from the definition of the mechanism).
  \item $o'_j = k(l_j((v_i, w^{-i})))$ for every $j \neq i$, i.e., the result of
  the appeals of the other agents (also, from the definition of the
  mechanism).
  \item $o'_j = k(\tau(w))$ for every $\tau \in L$. Since $a^{-i}$
  is in the domain of $b^i$, this set includes {\bf all} the
  outputs of the form $k(l_j(w))$ from the case where $i$ chooses
  $b^i(a^{-i})$. It also contains the result of its own appeal $\tau^i(w)$.
  \item $k(w)$ (from the definition of $l^i(.)$).
\end{enumerate}

\smallspace Let $\tilde{o}$ be the chosen output in this case.
Since the set of outputs in the second case is a superset of the
first, $g((v_i, w^{-i}), \tilde{o}) \geq g((v_i, w^{-i}),
\hat{o})$. According to Lemma \ref{LLemmaSecChanceUtil} the
utility of the agent when choosing $a^i$ is thus higher than when
choosing $b^i(a^{-i})$ -- a contradiction.
\end{proof}

\smallspace This completes the proof of Theorem
\ref{boundedRevisionFD}.
\end{proof}

\smallspace As in the case of appeal-independent functions, the
theorem gives a prescription for constructing an appeal that
guarantees the agent a feasibly dominant action.

\vskip 0.2in
\bibliographystyle{theapa}
\bibliography{bib12Compfeas}

\begin{thebibliography}{}

\bibitem[\protect\BCAY{Anderson, Kelly,\ \BBA\ Steinberg}{Anderson
  et~al.}{2002}]{aks02}
Anderson, E., Kelly, F., \BBA\ Steinberg, R. \BBOP2002\BBCP.
\newblock \BBOQ A contract and balancing mechanism for sharing capacity in a
  communication network\BBCQ.
\newblock To appear.

\bibitem[\protect\BCAY{Archer\ \BBA\ Tardos}{Archer\ \BBA\ Tardos}{2002}]{AT02}
Archer, A.\BBACOMMA\  \BBA\ Tardos, E. \BBOP2002\BBCP.
\newblock \BBOQ Frugal path mechanisms\BBCQ\
\newblock {\Bem In Proceedings of the 13th Annual ACM-SIAM Symposium on
  Discrete Algorithms}, 991--999.

\bibitem[\protect\BCAY{Banks, Ledyard,\ \BBA\ Porter}{Banks et~al.}{1989}]{BLP}
Banks, J., Ledyard, J., \BBA\ Porter, D. \BBOP1989\BBCP.
\newblock \BBOQ Allocating uncertain and unresponsive resources: An
  experimental approach\BBCQ\
\newblock {\Bem RAND Journal of Economics}, {\Bem 20}, 1--25.

\bibitem[\protect\BCAY{Bartal, Gonen,\ \BBA\ Nisan}{Bartal
  et~al.}{2003}]{BGN03}
Bartal, Y., Gonen, R., \BBA\ Nisan, N. \BBOP2003\BBCP.
\newblock \BBOQ Incentive compatible multi unit combinatorial auctions\BBCQ\
\newblock In {\Bem Proceedings of Ninth Conference of Theoretical Aspects of
  Rationality and Knowledge}, \BPGS\ 72--87.

\bibitem[\protect\BCAY{Bartholdi, Tovey,\ \BBA\ Trick.}{Bartholdi
  et~al.}{1992}]{BrtholdiToveyTrick92}
Bartholdi, J.~J., Tovey, C.~A., \BBA\ Trick., M.~A. \BBOP1992\BBCP.
\newblock \BBOQ How hard is it to control an election?\BBCQ\
\newblock {\Bem Mathematical and Computer Modelling (Special Issue on Formal
  Theories of Politics)}, {\Bem 16}, 27--40.

\bibitem[\protect\BCAY{Carroll\ \BBA\ Grosu}{Carroll\ \BBA\
  Grosu}{2005}]{grussoCArrol05}
Carroll, T.~E.\BBACOMMA\  \BBA\ Grosu, D. \BBOP2005\BBCP.
\newblock \BBOQ Distributed algorithmic mechanism design for scheduling on
  unrelated machines\BBCQ\
\newblock In {\Bem Proceedings of the 8th International Symposium on Parallel
  Architectures, Algorithms, and Networks}, \BPGS\ 194--199.

\bibitem[\protect\BCAY{Clarke}{Clarke}{1971}]{Cla71}
Clarke, E.~H. \BBOP1971\BBCP.
\newblock \BBOQ Multipart pricing of public goods\BBCQ\
\newblock {\Bem Public Choice}, 17--33.

\bibitem[\protect\BCAY{Cramton}{Cramton}{1997}]{cra97}
Cramton, P. \BBOP1997\BBCP.
\newblock \BBOQ The fcc spectrum auction: an early assessment\BBCQ\
\newblock {\Bem Journal of Economics and Management Strategy}, 431--495.

\bibitem[\protect\BCAY{Cramton, Shoham,\ \BBA\ Steinberg}{Cramton
  et~al.}{2006}]{cabook06}
Cramton, P., Shoham, Y., \BBA\ Steinberg, R. \BBOP2006\BBCP.
\newblock {\Bem Combinatorial Auctions}.
\newblock MIT Press.

\bibitem[\protect\BCAY{d'Aspremont\ \BBA\ Gerard-Varet}{d'Aspremont\ \BBA\
  Gerard-Varet}{1979}]{GerardVaret}
d'Aspremont, C.\BBACOMMA\  \BBA\ Gerard-Varet, L. \BBOP1979\BBCP.
\newblock \BBOQ Incentives and incomplete information\BBCQ\
\newblock {\Bem Journal of Public Economics}, {\Bem 11\/}(1), 25--45.

\bibitem[\protect\BCAY{Elkind, Sahai,\ \BBA\ Steiglitz}{Elkind
  et~al.}{2004}]{ess03}
Elkind, E., Sahai, A., \BBA\ Steiglitz, K. \BBOP2004\BBCP.
\newblock \BBOQ Frugality in path auctions\BBCQ\
\newblock In {\Bem Proceedings of the 15th Annual ACM-SIAM Symposium on
  Discrete Algorithms}, \BPGS\ 701--709.

\bibitem[\protect\BCAY{Feigenbaum, Papadimitriou,\ \BBA\ Shenker}{Feigenbaum
  et~al.}{2000}]{fps00}
Feigenbaum, J., Papadimitriou, C., \BBA\ Shenker, S. \BBOP2000\BBCP.
\newblock \BBOQ Sharing the cost of multicast transmissions\BBCQ\
\newblock In {\Bem Proceeding of the Thirty-Second Annual ACM Symposium on
  Theory of Computing}.

\bibitem[\protect\BCAY{Groves}{Groves}{1973}]{Gro73}
Groves, T. \BBOP1973\BBCP.
\newblock \BBOQ Incentives in teams\BBCQ\
\newblock {\Bem Econometrica}, {\Bem 41}, 617--631.

\bibitem[\protect\BCAY{Holzman, Kfir-Dahav, Monderer,\ \BBA\
  Tennenholtz}{Holzman et~al.}{2004}]{RNDM}
Holzman, R., Kfir-Dahav, N., Monderer, D., \BBA\ Tennenholtz, M.
  \BBOP2004\BBCP.
\newblock \BBOQ Bundling equilibrium in combinatorial auctions\BBCQ\
\newblock {\Bem Games and Economic Behavior}, {\Bem 47}, 104--123.

\bibitem[\protect\BCAY{Holzman\ \BBA\ Monderer}{Holzman\ \BBA\
  Monderer}{2004}]{RD02}
Holzman, R.\BBACOMMA\  \BBA\ Monderer, D. \BBOP2004\BBCP.
\newblock \BBOQ Characterization of ex post equilibrium in the vcg
  combinatorial auctions\BBCQ\
\newblock {\Bem Games and Economic Behavior}, {\Bem 47}, 87--103.

\bibitem[\protect\BCAY{Lavi, Nisan,\ \BBA\ Mualem}{Lavi et~al.}{2003}]{LNM03}
Lavi, R., Nisan, N., \BBA\ Mualem, A. \BBOP2003\BBCP.
\newblock \BBOQ Towards a characterization of truthful combinatorial
  auctions\BBCQ\
\newblock In {\Bem Proceedings of the 44th Annual IEEE Symposium on Foundations
  of Computer Science}.

\bibitem[\protect\BCAY{Lehmann, O`Callaghan,\ \BBA\ Shoham}{Lehmann
  et~al.}{2002}]{lsl99}
Lehmann, D., O`Callaghan, L., \BBA\ Shoham, Y. \BBOP2002\BBCP.
\newblock \BBOQ Truth revelation in rapid, approximately efficient
  combinatorial auctions\BBCQ\
\newblock {\Bem Journal of the ACM}, {\Bem 49\/}(5), 577--602.
\newblock A preliminay version appeared at Proc. of the first ACM Conference on
  Electronic Commerce.

\bibitem[\protect\BCAY{Mas-Collel, Whinston,\ \BBA\ Green}{Mas-Collel
  et~al.}{1995}]{MWG95}
Mas-Collel, A., Whinston, W., \BBA\ Green, J. \BBOP1995\BBCP.
\newblock {\Bem Microeconomic Theory}.
\newblock Oxford university press.

\bibitem[\protect\BCAY{Nisan}{Nisan}{2000}]{NN00}
Nisan, N. \BBOP2000\BBCP.
\newblock \BBOQ Bidding and allocation in combinatorial auctions\BBCQ\
\newblock In {\Bem Proceedings of the Second ACM Conference on Electronic
  Commerce}, \BPGS\ 1--12.

\bibitem[\protect\BCAY{Nisan\ \BBA\ Ronen}{Nisan\ \BBA\ Ronen}{2001}]{NR00}
Nisan, N.\BBACOMMA\  \BBA\ Ronen, A. \BBOP2001\BBCP.
\newblock \BBOQ Algorithmic mechanism design\BBCQ\
\newblock {\Bem Games and Economic Behaviour}, {\Bem 35}, 166--196.
\newblock Extended abstract appeared in Proceedings of the Thirty First Annual
  ACM symposium on Theory of Computing.

\bibitem[\protect\BCAY{Osborne\ \BBA\ Rubinstein}{Osborne\ \BBA\
  Rubinstein}{1994}]{RO94}
Osborne, M.~J.\BBACOMMA\  \BBA\ Rubinstein, A. \BBOP1994\BBCP.
\newblock {\Bem A Course in Game Theory}.
\newblock MIT press.

\bibitem[\protect\BCAY{Parkes}{Parkes}{1999}]{pa99}
Parkes, D. \BBOP1999\BBCP.
\newblock \BBOQ ibundle: An efficient ascending price bundle auction.\BBCQ\
\newblock In {\Bem Proceedings of the ACM Conference on Electronic Commerce
  (EC-99)}, \BPGS\ 148--157.

\bibitem[\protect\BCAY{Porter, Ronen, Shoham,\ \BBA\ Tennenholtz}{Porter
  et~al.}{2002}]{PSTR02}
Porter, R., Ronen, A., Shoham, Y., \BBA\ Tennenholtz, M. \BBOP2002\BBCP.
\newblock \BBOQ Mechanism design with execution uncertainty\BBCQ\
\newblock In {\Bem In Proceedings of the 18th Conference on Uncertainty in
  Artificial Intelligence}, \BPGS\ 414--421.

\bibitem[\protect\BCAY{Roberts}{Roberts}{1979}]{Rob77}
Roberts, K. \BBOP1979\BBCP.
\newblock \BBOQ The characterization of implementable choise rules\BBCQ\
\newblock In Laffont, J.-J.\BED, {\Bem Aggregation and Revelation of
  Preferences}, \BPGS\ 321--349. North-Holland.
\newblock Papers presented at the first European Summer Workshop of the
  Econometric Society.

\bibitem[\protect\BCAY{Ronen}{Ronen}{2001}]{R01b}
Ronen, A. \BBOP2001\BBCP.
\newblock \BBOQ Mechanism design with incomplete languages\BBCQ\
\newblock {\Bem In Proceedings of the Third ACM Conference on Electronic
  Commerce}, 105--114.

\bibitem[\protect\BCAY{Rosenschein\ \BBA\ Zlotkin}{Rosenschein\ \BBA\
  Zlotkin}{1994}]{JG94}
Rosenschein, J.~S.\BBACOMMA\  \BBA\ Zlotkin, G. \BBOP1994\BBCP.
\newblock {\Bem Rules of Encounter: Designing Conventions for Automated
  Negotiation Among Computers}.
\newblock MIT Press.

\bibitem[\protect\BCAY{Shoham\ \BBA\ Tanaka}{Shoham\ \BBA\ Tanaka}{1997}]{st97}
Shoham, Y.\BBACOMMA\  \BBA\ Tanaka, K. \BBOP1997\BBCP.
\newblock \BBOQ A dynamic theory of incentives in multi-agent systems
  (preliminary report)\BBCQ\
\newblock In {\Bem Proceedings of the Fifteenth International Joint Conferences
  on Artificial Intelligence}, \BPGS\ 626--631.

\bibitem[\protect\BCAY{Shoham\ \BBA\ Tennenholtz}{Shoham\ \BBA\
  Tennenholtz}{2001}]{stijcai2001}
Shoham, Y.\BBACOMMA\  \BBA\ Tennenholtz, M. \BBOP2001\BBCP.
\newblock \BBOQ The fair imposition of tasks in multi-agent systems\BBCQ\
\newblock In {\Bem Proceedings of the International Conference on Artificial
  Intelligence}, \BPGS\ 1083--1088.

\bibitem[\protect\BCAY{Vickrey}{Vickrey}{1961}]{Vic61}
Vickrey, W. \BBOP1961\BBCP.
\newblock \BBOQ Counterspeculation, auctions and competitive sealed
  tenders\BBCQ\
\newblock {\Bem Journal of Finance}, 8--37.

\bibitem[\protect\BCAY{Wellman, Wurman, Walsh,\ \BBA\ MacKie-Mason}{Wellman
  et~al.}{2001}]{Wellmangeb}
Wellman, M., Wurman, P., Walsh, W., \BBA\ MacKie-Mason, J. \BBOP2001\BBCP.
\newblock \BBOQ Auction protocols for decentralized scheduling\BBCQ\
\newblock {\Bem Games and Economic Behavior}, {\Bem 35}, 271--303.

\bibitem[\protect\BCAY{Zuckerman}{Zuckerman}{2006}]{zuckerman06}
Zuckerman, D. \BBOP2006\BBCP.
\newblock \BBOQ Linear degree extractors and the inapproximability of max
  clique and chromatic number\BBCQ\
\newblock In {\Bem Proceedings of the 38th ACM Symposium on Theory of
  Computing}, Seattle, Washington, USA.

\end{thebibliography}

\end{document}